\definecolor{mpBlue}{RGB}{21, 101, 192}
\definecolor{mpGreen}{RGB}{46, 125, 50}
\newtheorem*{proposition*}{Proposition}
\newtheorem*{theorem*}{Theorem}
\newtheorem*{lemma*}{Lemma}
\declaretheorem{proposition}
\declaretheorem{theorem}
\declaretheorem[sibling=proposition]{lemma}
\newtheorem{definition}[theorem]{Definition}
\newtheorem{corollary}[theorem]{Corollary}
\renewcommand{\paragraph}[1]{\addcontentsline{toc}{section}{#1}\emph{#1.}---}
\DeclareMathOperator{\NN}{\mathbb{N}}
\DeclareMathOperator{\CC}{\mathbb{C}}
\DeclareMathOperator{\id}{id}
\DeclareMathOperator{\I}{\mathds{1}}
\DeclareMathOperator{\lin}{\mathcal{L}}
\DeclareMathOperator{\dens}{\mathcal{D}}
\DeclareMathOperator{\linspan}{span}
\DeclareMathOperator{\Tr}{Tr}
\DeclareMathOperator{\oM}{\mathsf{M}}
\DeclareMathOperator{\oP}{\mathsf{P}}
\DeclareMathOperator{\oT}{\mathsf{T}}
\DeclareMathOperator{\J}{\mathcal{J}}
\newcommand*{\bra}[1]{\langle #1 |}
\newcommand*{\ket}[1]{| #1 \rangle \@ifnextchar\bra{\!}{}}
\newcommand*{\ketbra}[1]{| #1 \rangle \! \langle #1 |}
\newcommand{\etaJM}{\eta^{\mathrm{JM}}}
\newcommand{\etaB}[1]{\eta^{\mathrm{Bell}}_{#1}}
\begin{document}

\title{All incompatible measurements on qubits lead to multiparticle Bell nonlocality}
\author{Martin Pl\'{a}vala}

\affiliation{Naturwissenschaftlich-Technische Fakult\"{a}t, Universit\"{a}t Siegen, Walter-Flex-Stra\ss e 3, 57068 Siegen, Germany}
\affiliation{Institut f\"{u}r Theoretische Physik, Leibniz Universit\"{a}t Hannover, Hannover, Germany}

\author{Otfried G\"{u}hne}
\affiliation{Naturwissenschaftlich-Technische Fakult\"{a}t, Universit\"{a}t Siegen, Walter-Flex-Stra\ss e 3, 57068 Siegen, Germany}

\author{Marco Túlio Quintino}
\affiliation{Sorbonne Université, CNRS, LIP6, Paris F-75005, France}

\begin{abstract}
Bell nonlocality is a fundamental phenomenon of quantum physics as well as an essential resource for various tasks in quantum information processing. It is known that for the observation of nonlocality the measurements on a quantum system have to be incompatible, but the question which incompatible measurements are useful, remained open. Here we prove that any set of incompatible measurements on qubits leads to a violation of a suitable Bell inequality in a multiparticle scenario, where all parties perform the same set of measurements. Since there exists incompatible measurements on qubits which do not lead to Bell nonlocality for two particles,  our results demonstrate a fundamental difference between two-particle and multi-particle nonlocality, pointing at the superactivation of measurement incompatibility as a resource. In addition, our results imply that measurement incompatibility for qubits can always be certified in a device-independent manner.
\end{abstract}

\maketitle

\paragraph{Introduction}
The study of quantum correlations in the form of entanglement \cite{horodecki2009entanglement, GuhneToth-entanglement}, steering \cite{WisemanDohertyJones-nonlocal, UolaCostaNguyenGuhne-steering}, and Bell nonlocality \cite{BrunnerCavalcantiPironioScaraniWehner-BellNonlocality} has been a central topic in quantum mechanics for several decades, with Bell nonlocality being the strongest form of correlation among them. In additional to its relevance to quantum foundations~\cite{Pawlowski09Causality}, Bell nonlocality plays a deep role in various branches of quantum information and quantum computation~\cite{Cleve04NLgames,MPIstar,cabello23_MIPstar}. The discovery of device-independent protocols~\cite{Pironio2009DI} allows Bell nonlocality to be used for proving security in cryptographic protocols where the attackers may access general physical theories beyond quantum physics \cite{Zhang_2022,Nadlinger_2022}.

In quantum theory, Bell nonlocality requires the combination of two key ingredients, entangled states and incompatible measurements. Studies of the relationship between entanglement and Bell nonlocality date back to 1989, with Werner's seminal paper~\cite{Werner1989}, which provides an explicit local hidden variable (LHV) model for quantum states, showing that entanglement is not sufficient for Bell nonlocality with projective measurements. This result was then extended to treat the most general class of measurements, described by a Positive Operator-Valued Measure (POVM)~\cite{Barrett2002POVM}. Our knowledge on how entanglement relates to nonlocality has been growing since then~\cite{almeida07,augusiak_review,quintino15EntSteBell,hirsch15,cavalcanti15}, leading to the phenomena of entangled states with hidden nonlocality~\cite{popescu95,hirsch13}, entangled states whose nonlocality can be superactivated by performing joint measurements on multiple copies~\cite{palazuelos12,Quintino16steeringSuperActivation}, among others~\cite{sen03,cavalcanti10}.

When compared to entanglement, the relationship between measurement incompatibility and Bell nonlocality is less understood. A first striking result came in 2009~\cite{WolfPerezgarciaFernandez-measIncomp}, where Wolf, Perez-Garcia, and Fernandez showed that a pair of two-outcome POVMs can be used to violate a Bell inequality, if and only if it is incompatible. Later, it was shown that measurement incompatibility is necessary and sufficient for steering~\cite{QuintinoVertesiBrunner-steering,UolaMoroderGuhne-steering}. However, for general bipartite scenarios, it was proven that there are sets of measurements which are incompatible, but cannot lead to Bell nonlocality~\cite{quintino15BellJM,HirschQuintinoBrunner-noBell,BeneVertesi-noBell}, a result which may be viewed as the existence of LHV models for measurements, in similar spirit to entangled, 
but Bell-local Werner states~\cite{Werner1989}.

In this paper, we address whether incompatible POVMs can violate Bell inequalities and lead to experimental demonstrations of quantum nonlocality. Contrary to the bipartite case, we show that for qubits any set of incompatible POVMs leads to Bell nonlocality and violates some multipartite Bell inequality where all parties perform the same set of measurements, see Fig.~\ref{fig:jm-nl}.

\begin{figure}
\centering
\includegraphics[width=\linewidth]{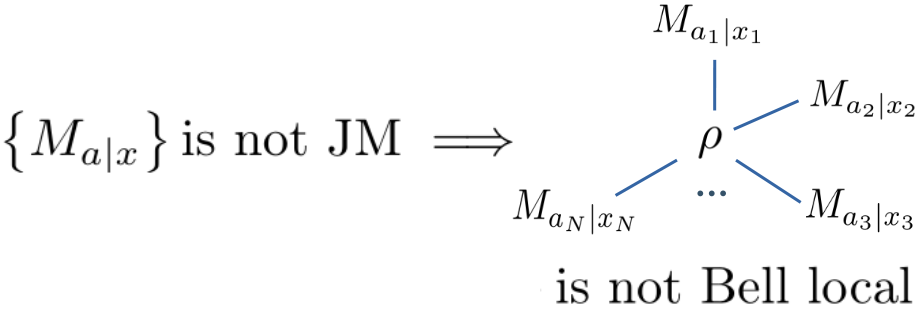}
\caption{In this paper we show that, if $\{M_{a|x}\}$ is a set of qubit measurements (so-called POVMs where $a$ and $x$ label the inputs and outputs, respectively) which is incompatible, then there exist a number of parties $N \in \NN$ and a quantum state $\rho\in\dens(\CC_2^{\otimes N})$ such that the behaviour $ p(a_1 \ldots a_N| x_1 \ldots x_N) = \Tr[ \rho (M_{a_1|x_1} \otimes \ldots \otimes M_{a_N|x_N})]$ is Bell nonlocal, i.e., it violates some Bell inequality.}
\label{fig:jm-nl}
\end{figure}

\paragraph{Preliminaries}
We will use $\CC_d$ to denote a complex, finite-dimensional Hilbert space, $\dim(\CC_d) = d$, $\lin(\CC_d)$ will denote the set of self-adjoint (linear) operators on $\CC_d$ and an operator $A \in \lin(\CC_d)$ is positive semidefinite, $A \geq 0$, when $A$ is self-adjoint and all the eigenvalues of $A$ are non-negative. $\dens(\CC_d)$ will denote the set of density matrices which represent states of a quantum system associated to the Hilbert space $\CC_d$, that is, $\rho \in \lin(\CC_d)$ if $\rho \geq 0$ and $\Tr(\rho) = 1$ and $\I$ will denote the identity operator.

A POVM  $\{ M_{a} \}$  is a set of positive semidefinite operators that sum to identity, i.e., $M_a \geq 0$ and $\sum_a M_a = \I$. POVMs represent the most general description of a measurement that can be performed on a quantum system, and the probability of obtaining the outcome $a$ when performing the POVM $\{M_a\}$ on the state $\rho \in \dens(\CC_d)$ is given by the Born's rule, which reads as $p(a) = \Tr(\rho M_a)$.

A set of POVMs is a set of positive semidefinite operators $M_{a|x}$ such that for every fixed value of $x$, $\{M_{a|x}\}$ is a POVM, in this way $x$ may be viewed as input choice and $a$ as the outcome. We will use $\oM = \{ M_{a|x} \}$ to denote the whole set, while $M_{a|x}$ will denote the specific operator. A set of POVMs represents several possible measurements, and one can ask whether all the measurements in the collection can be measured in a single experiment. This is captured by the definition of joint measurability \cite{HeinosaariMiyaderaZiman-compatibility,GuhneHaapasaloKraftPellonpaaUola-incomaptiblity}.

\begin{definition}
 A set of POVMs $\{M_{a|x}\}$ is jointly measurable  (JM) or compatible if it can be decomposed as
\begin{align}
    M_{a|x}=\sum_{\lambda} p(a|x, \lambda) G_\lambda, \quad \forall a,x,
\end{align}    
where $\{G_\lambda\}$ is the so-called mother POVM, obeying $G_\lambda \geq 0$ and $\sum_\lambda G_\lambda = \I$, and the $p(a|x,\lambda)$ are conditional probability distributions, i.e., $p(a|x,\lambda)\geq0$, and $\sum_ap(a|x,\lambda)=1$ for any $x,\lambda$. Set of measurements is called incompatible if it is not JM.
\end{definition}

Quantum systems are composed by means of the tensor product, a quantum state $\rho \in \dens(\CC_d^{\otimes N})$ may then be viewed as a quantum state shared by $N$ parties which we label with an integer $n\in\{1,\ldots,N\}$. We now consider a multipartite scenario where each party may perform quantum measurements on their respective local systems. These measurements may be described by the sets of POVMs $\oM^n = \{ M^n_{a_n|x_n} \}$ and the probability that the parties obtain the respective outcomes $a_1 \ldots a_N$ given the choices of measurements $x_1 \ldots x_N$ is $p(a_1 \ldots a_N| x_1 \ldots x_N) = \Tr[ \rho (M^1_{a_1|x_1} \otimes \ldots \otimes M^n_{a_n|x_n})]$. As standard in the literature of Bell nonlocality,  we refer to a set of probability distributions $\{p(a_1 \ldots a_N| x_1 \ldots x_N) \}$ as a (multipartite) behaviour. For such a behaviour, one can ask whether it is nonlocal or not \cite{BrunnerCavalcantiPironioScaraniWehner-BellNonlocality}.

\begin{definition}\label{def:Bell}
A behaviour $\{p(a_1 \ldots a_N| x_1 \ldots x_N)\}$ is (N-partite) Bell local if it can be written as 
\small
\begin{equation}
    p(a_1 \ldots a_N| x_1 \ldots x_N) = \sum_\lambda p(\lambda) p_1(a_1|x_1, \lambda) \ldots p_N(a_N|x_N, \lambda)
\end{equation}
\normalsize
where $\{p(\lambda)\}$ is a probability distribution and $p_n(a_n|x_n, \lambda)$ are conditional probabilities.
\end{definition}

\paragraph{Bell nonlocality and joint measurability}
The relationship between measurement incompatibility and Bell nonlocal correlations have been investigated over the years from foundational and practical perspectives \cite{anderson05,son05,WolfPerezgarciaFernandez-measIncomp,QuintinoVertesiBrunner-steering,UolaMoroderGuhne-steering,uola15}. It is straightforward to verify that measurement incompatibility is a necessary resource for Bell nonlocality, that is, if the set of measurements performed by $N-1$ parties are JM, the corresponding behaviour $p(a_1 \ldots a_N| x_1 \ldots x_N) = \Tr[ \rho_N (M^1_{a_1|x_1} \otimes \ldots \otimes M^n_{a_n|x_n})]$ is Bell local regardless of the quantum state $\rho$ and regardless of the measurements performed by the other party~\cite{Fine-BellIneq}. From a more practical perspective, the necessity of measurement incompatibility for Bell nonlocality allows a device-independent certification of measurement incompatibility, that is, if  a behaviour $p(a_1 \ldots a_N| x_1 \ldots x_N)$ is Bell nonlocal, at least two parties are performing incompatible measurements \cite{chen2016natural,quintino2019device}.

The question whether measurement incompatibility is a sufficient requirement for Bell nonlocality is considerably more complex. Refs.~\cite{WolfPerezgarciaFernandez-measIncomp,loulidi2022measurement} shows that a pair of two-outcome POVMs $\oM=\{M_{0|0}, M_{1|0}, M_{0|1}, M_{1|1}\}$ can be used to violate a Bell inequality, if and only if $\oM$ is incompatible. However, for bipartite scenarios, this strong relationship between Bell nonlocality and measurement incompatibility turns out to be a particularity of a pair of two-outcome measurements. Refs~\cite{HirschQuintinoBrunner-noBell,BeneVertesi-noBell} present explicit sets of incompatible qubit measurements $\{M_{a|x}\}$ such that, for any bipartite state, and any set of quantum measurements for Bob $\{N_{b|y}\}$ the behaviour given by $p(ab|xy)=\Tr[\rho_{AB} (M_{a|x} \otimes N_{b|y})]$ is guaranteed to be Bell local (see also Ref.~\cite{quintino15BellJM}, which proves the same result under the hypothesis that Bob performs two-outcome POVMs). Hence, in a bipartite scenario, even for qubits measurement incompatibility is not a sufficient resource to observe Bell nonlocality.

\paragraph{Main result}
We now state our main result.

\begin{restatable}{theorem}{thmMain} \label{thm:main}
Let $\CC_2$ be the qubit Hilbert space, $\dim(\CC_2) = 2$, and let $\oM = \{M_{a|x}\}$ be a finite set of qubit POVMs, that is $M_{a|x} \in \lin(\CC_2)$, $M_{a|x} \geq 0$ and $\sum_{a} M_{a|x} = \I$; $a \in \{1, \ldots, n_a\}$, $x \in \{1, \ldots, n_x\}$ and $n_a, n_x \in \NN$. If $\oM$ is incompatible, then there exists an $N \in \NN$ and a state $\rho \in \dens(\CC_2^{\otimes N})$ such that the behaviour $p(a_1 \ldots a_N| x_1 \ldots x_N) = \Tr[\rho (M_{a_1|x_1} \otimes \ldots \otimes M_{a_N|x_N})]$ is multipartite Bell nonlocal.
\end{restatable}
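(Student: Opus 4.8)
The plan is to produce, for a sufficiently large number of parties $N$, an explicit multipartite Bell inequality of tensor-power type that is violated by a GHZ-like state on which every party measures $\oM$. Concretely: for complex weights $\gamma_{a|x}\in\CC$ set $G=\sum_{a,x}\gamma_{a|x}M_{a|x}\in\lin(\CC_2)$ and take the $N$-partite functional with coefficients $c_{a_1\ldots a_N\mid x_1\ldots x_N}=\mathrm{Re}\,\prod_{n=1}^{N}\gamma_{a_n|x_n}$. Its quantum value on a state $\rho$ is $\Tr\!\big[\rho\,\tfrac12(G^{\otimes N}+(G^\dagger)^{\otimes N})\big]$, while its local (LHV) bound is $C_N=\max_{f_1,\ldots,f_N}\mathrm{Re}\prod_n w_{f_n}\le\Gamma^{N}$, where each $f_n$ assigns an outcome to every input, $w_f:=\sum_x\gamma_{f(x)|x}$, and $\Gamma:=\max_f|w_f|$. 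On the quantum side I would use the cat state $\ket{\Psi_N}=\tfrac1{\sqrt2}\big(\ket{u}^{\otimes N}+\ket{v}^{\otimes N}\big)$ built from the top right/left singular vectors of $G$, normalised so that $G\ket{v}=\norm{G}\ket{u}$ with $\norm{\cdot}$ the operator norm; expanding $\braket{\Psi_N|G^{\otimes N}|\Psi_N}$ gives $\tfrac12\norm{G}^{N}$ plus three terms whose bases are $|\braket{u|G|u}|$, $|\braket{v|G|u}|$ and $\norm{G}\,|\braket{v|u}|$, so once these are all strictly below $\norm{G}$ the quantum value behaves like $\tfrac12\norm{G}^{N}(1+o(1))$ as $N\to\infty$.

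The whole statement then reduces to the dichotomy: \emph{$\oM$ is incompatible if and only if there exist complex weights with $\norm{\sum_{a,x}\gamma_{a|x}M_{a|x}}>\max_f|\sum_x\gamma_{f(x)|x}|$.} The ``if'' direction (which re-proves that JM measurements cannot violate any Bell inequality) is the easy one: if $M_{a|x}=\sum_{f:f(x)=a}G_f$ for a POVM $\{G_f\}$ indexed by the deterministic post-processings, then $G=\sum_f c_f G_f$ with each $c_f$ an average of the $w$'s, hence $|c_f|\le\Gamma$, and $\norm{G}\le\Gamma$ follows from a Cauchy--Schwarz estimate using $G_f=\sqrt{G_f}^{\,2}$. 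The ``only if'' direction — the one needed here — I would obtain from the semidefinite-programming characterisation of joint measurability: non-JM yields a Hermitian incompatibility witness $\{Z_{a|x}\}$ with $\sum_x Z_{f(x)|x}\le 0$ for every $f$ and $\sum_{a,x}\Tr[Z_{a|x}M_{a|x}]>0$; dualising through trace-norm duality and comparing the support functions of the two balanced convex bodies involved — the range $\{(\Tr[M_{a|x}X])_{a,x}:\norm{X}_1\le1\}$ versus the balanced convex hull of the deterministic vectors $e_f=(\delta_{a,f(x)})_{a,x}$ — converts non-JM into the required weights $\gamma$. A useful by-product: the strict inequality forces $G$ to be non-normal, which is exactly what guarantees the three subleading bases above are $<\norm{G}$, so there is no clash with the quantum-value estimate.

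Putting it together: for incompatible $\oM$ choose $\gamma$ from the dichotomy so that $\norm{G}>\Gamma$; then for $N$ large enough that $\tfrac12\norm{G}^{N}(1-o(1))>\Gamma^{N}\ge C_N$, the state $\rho=\ketbra{\Psi_N}$ makes $p(a_1\ldots a_N\mid x_1\ldots x_N)=\Tr[\rho\,(M_{a_1|x_1}\otimes\cdots\otimes M_{a_N|x_N})]$ violate the Bell inequality $\sum_{\vec a,\vec x}c_{\vec a\mid\vec x}\,p(\vec a\mid\vec x)\le C_N$, which is the assertion. (As a sanity check, for $\oM=\{\sigma_x\text{-, }\sigma_y\text{-measurement}\}$ this recovers exactly the Mermin inequality on a GHZ state.)

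I expect the dichotomy to be the main obstacle — essentially the convex duality between the joint-measurability region and the tensor-power Bell region, with the qubit structure used to pass from the operator witness $\{Z_{a|x}\}$ to scalar weights $\gamma_{a|x}$; an alternative route to it goes via the known equivalence between incompatibility of qubit measurements and steerability of the maximally-entangled assemblage, porting a steering inequality into the multipartite scenario. The leftover bookkeeping — controlling the oscillating subleading terms so that a genuine finite $N$ exists, and dispatching degenerate cases such as $\Gamma=0$ or $G=0$ — is routine but not empty.
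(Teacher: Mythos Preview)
Your plan is constructive and quite different from the paper's route. The paper argues by contraposition: if $\oM^{\otimes N}(\rho)$ were Bell local for every $N$ and every $\rho$, then composing $\oM$ with a suitable positive map $\tilde Q:B(n_a,n_x)\to\dens(\CC_2)$ produces a qubit map $\Phi=\tilde Q\circ\oM$ that is positive, tensor-stable positive, and locally entanglement annihilating; a deep structural fact for qubits (via the classification of tensor-stable positive maps of M\"uller-Hermes--Reeb--Wolf and the result of Christandl--M\"uller-Hermes--Wolf on entanglement-annihilating qubit channels) forces such a $\Phi$ to be entanglement breaking, hence completely positive, which contradicts the choice of $\tilde Q$ as a separating witness when $\oM$ is incompatible. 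No explicit state or Bell inequality is produced.

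The gap in your proposal is the ``only if'' direction of the dichotomy, and it is not a matter of bookkeeping: as stated it is false. Take the three noisy Pauli measurements $M_{a|j}=\tfrac12\big(\I+(-1)^a\eta\,\sigma_j\big)$, $j\in\{1,2,3\}$, with $\eta\in(1/\sqrt3,\,1/\sqrt2)$. This set is incompatible (the joint-measurability threshold for the Pauli triple is $1/\sqrt3$), yet no complex weights give $\norm{G}>\Gamma$. Writing $\beta_j=\gamma_{0|j}-\gamma_{1|j}=r_j+is_j$ and restricting to the natural traceless case $\sum_j(\gamma_{0|j}+\gamma_{1|j})=0$, one has $G=\tfrac{\eta}{2}\vec\beta\cdot\vec\sigma$ and $w_\epsilon=\tfrac12\vec\epsilon\cdot\vec\beta$ for $\vec\epsilon\in\{\pm1\}^3$. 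Then
\[
\norm{G}^2=\tfrac{\eta^2}{4}\big(|\vec r|^2+|\vec s|^2+2|\vec r\times\vec s|\big)\le \tfrac{\eta^2}{2}\big(|\vec r|^2+|\vec s|^2\big),
\]
while averaging over $\vec\epsilon$ gives $\Gamma^2\ge\tfrac18\sum_{\vec\epsilon}|w_\epsilon|^2=\tfrac14\big(|\vec r|^2+|\vec s|^2\big)$; hence $\norm{G}^2/\Gamma^2\le 2\eta^2\le1$. Allowing a nonzero identity component $cI$ does not rescue the inequality (one checks that the ratio is maximised at $c=0$ for the optimal $\vec\beta$), so the dichotomy fails precisely in the regime where the triple is incompatible but every pair is compatible. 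The conceptual reason is a dimension mismatch: the SDP incompatibility witness you invoke is operator-valued, here $Z_{a|j}=(-1)^a\sigma_j-\tfrac{1}{\sqrt3}\I$, and genuinely uses all three Pauli directions; a single complex scalar $\gamma_{a|j}$ carries only two real degrees of freedom and cannot encode that three-dimensional structure, so the ``dualising through trace-norm duality'' step you sketch cannot compress the operator witness to scalar weights in general. Your Mermin-type construction is exactly the right tool when the incompatibility lives in a two-dimensional subspace of the Bloch ball (and then it recovers the tight thresholds, as in your sanity checks), but it is blind to genuinely three-dimensional incompatibility---which is precisely why the paper has to fall back on the non-constructive tensor-stable-positivity machinery.
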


In short, this theorem states that, for qubits, if a set of POVMs $\oM$ is incompatible, there exists a quantum state $\rho$ such that if the $N$ parties perform exactly the same measurements $\oM$ on $\rho$, the resulting behaviour leads to a violation of some Bell inequality. The detailed proof of Theorem~\ref{thm:main} is delegated to the Appendices \ref{app:obs1}, \ref{app:obs2}, \ref{app:obs3}, and \ref{app:proof}, it relies on the following four key steps.

(a) First, we reinterpret the notion of Bell nonlocality of a behaviour. In fact, Bell locality in the sense of Def.~\ref{def:Bell} means that the behaviour can be seen as separable state in the tensor product space of conditional probabilities. This is not the standard notion of separability of quantum states \cite{GuhneToth-entanglement}, rather it is a notion of separability for generalized probabilistic theories \cite{Barrett-GPTinformation,Plavala-review}.

(b) Then, we consider the set of POVMs $\oM = \{M_{a|x}\}$ containing $n_x$ measurements. Since $p(a|x) = \Tr(\rho M_{a|x})$ is a conditional probability, we can consider $\oM$ as a map from density matrices to conditional probabilities. Ideas similar to the Choi-Jamio{\l}kowski isomorphism \cite{dePillis1967linear,jamiolkowski1972linear,choi1975completely} associate to this map a state of the tensor product of positive operators and conditional probabilities. We show that if $\oM$ as a map does not correspond to a separable state in this picture, then there must exist a certain positive, but not completely positive map $\Phi$ on the space of quantum states, arising from a witness-like construction.

(c) Third, we consider general maps on qubits. We prove that a positive map that is entanglement annihilating must be completely positive, which is an extension of a result in Ref.~\cite{ChristandlMullerhermesWolf-composedETB}. This connects to (a) and (b) as follows: If a set of POVMs $\oM$ applied to $N$ qubits gives always Bell-local, i.e., separable according to (a), behaviours, then maps like $\Phi$ must be entanglement annihilating. It follows that in this case the map $\oM$ in the sense of (b) must be separable, otherwise there is a contradiction to $\Phi$ not being completely positive.

(d) Finally, we close the argument. We show that if $\oM$ is a separable map in (b), then the set of POVMs is JM. This follows by interpreting the separable decomposition in (b) as a mother POVM for all the POVMs in $\oM$, similar arguments are known \cite{NamiokaPhelps-cones, Jencova-incomaptibility,Jencova-steering}. This proves the Theorem: If the POVMs in the set $\oM$ are incompatible, then there must be an $N$ where the behaviour generated by these measurements on qubits becomes Bell nonlocal. If this were not the case, then according to (c), $\oM$ is a separable map in (b) and we would arrive at a contradiction.

An immediate question to ask is whether for a given set of POVMs $M$ one can identify necessary or sufficient number of parties $N$ for a violation of some Bell inequality. The most feasible way to obtain analytic bounds would be to investigate whether the positive but not completely positive map constructed in step (b) is locally entanglement annihilating for $N$ parties: if the map $\Phi$ is not locally entanglement annihilating, then also $M$ is not locally entanglement annihilating and thus some Bell inequality must be violated. For $N=2$ such conditions are known for specific classes of positive maps \cite{filippov2012local,filippov2013bipartite}, so they are not directly applicable to our case. In general deciding whether a given map is locally entanglement annihilating is related to whether a related map is tensor stable positive \cite{MullerhermesReebWolf-tensorPositive}, which was recently conjectured to be undecidable problem \cite{van2022halos}. We will later show that this problem may be approached numerically for certain scenarios.

It is natural to ask whether our result can be extended to $d>2$ dimensions. In the following we prove that, if our result does not hold for an arbitrary dimension $d$, then there exist bipartite qudit states $\rho_{AB}\in\mathcal{D}(\mathbb{C}_{d^2})$ which has a non-positive partial transpose (NPT), i.e., $\rho_{AB}^{T_B}$ is not positive, and $\rho_{AB}$ is bound entangled, i.e., a maximally entangled state cannot be distilled from many copies of $\rho_{AB}$~\cite{Horoceki1998BoundEnt}. The existence of NPT bound entangled states is a longstanding important open problem in quantum information, which has been studied from various researchers and different perspectives~\cite{IQOQIopen,Horodecki2022Five,bennett1999unextendible,Watrous2004Bound,Vianna2006Bound,MullerhermesReebWolf-tensorPositive,AubrunMullerhermes-anihilation}. Hence, we expect that generalising our main theorem for $d>2$ to be a very hard mathematical problem.
\begin{restatable}{theorem}{thmNPTboundEnt} \label{thm:NPTboundEnt}
If there exist a set of $d$-dimensional incompatible POVMs $\oM = \{M_{a|x}\}$ such that for all $N \in \NN$ and all states $\rho \in \dens(\CC_d^{\otimes N})$, the behaviour $p(a_1 \ldots a_N| x_1 \ldots x_N) = \Tr[\rho (M_{a_1|x_1} \otimes \ldots \otimes M_{a_N|x_N})]$ is Bell local, then there exist an NPT bound entangled bipartite quantum state in dimension $d^2$.
\end{restatable}
If the conditions of the Theorem~\ref{thm:NPTboundEnt} hold, then one can use the ideas from the proof of Theorem~\ref{thm:main} to show that there must be maps which are entanglement annihilating, but not entanglement breaking, which implies the existence of NPT bound entanglement according to Ref.~\cite{AubrunMullerhermes-anihilation}; details can be found in the Appendix~\ref{app:proofNPTboundEnt}.

\paragraph{Superactivation of measurement incompatibility for Bell nonlocality}
As previously mentioned, Refs.~\cite{HirschQuintinoBrunner-noBell, BeneVertesi-noBell} present sets of qubit dichotomic measurements $\oM = \{ M_{a|x} \}$ which are not jointly measurable, but, for any quantum state $\rho_{AB} \in \dens(\CC_2 \otimes \CC_d)$ and any sets of measurements $\{ B_{b|y} \}$, the bipartite behaviour $p(ab|xy) = \Tr[\rho_{AB} (M_{a|x} \otimes B_{b|y})]$ is Bell local. This result is interpreted as existence of incompatible measurements which are useless for Bell nonlocality in bipartite scenarios. When multipartite scenarios are considered, our main result shows that  all sets of incompatible qubit measurements can lead to Bell nonlocality. One might then argue that the sets of incompatible but Bell local measurements $\oM = \{ M_{a|x} \}$ presented in Refs.~\cite{HirschQuintinoBrunner-noBell,BeneVertesi-noBell} have some nonlocality that is activated in a scenario with multiple parties.

Interestingly, Ref.~\cite{HirschQuintinoBrunner-noBell} shows that, if $\oM = \{ M_{a|x} \}$ is a set of measurements leading to Bell local correlations on any bipartite scenario, this set of measurements cannot lead to genuine multipartite Bell nonlocality if more parties are considered. Let us, for concreteness, present the argument for the tripartite case. Let $p(abc|xyz)$ be the probabilities of a tripartite behaviour. For the given measurements, it must be by assumption Bell local in the $A/BC$ bipartition. That is, it can be written as $p(abc|xyz)=\sum_\lambda p(\lambda)p_A(a|x,\lambda) p_{BC}(bc|yz,\lambda)$. There exists various non-equivalent notions of genuine multipartite Bell nonlocality \cite{Bancal2011}, but the loosest one is given by Svetlichny~\cite{Svetlichny1987}, and states that a tripartite behaviour is not genuine tripartite Bell local if it can be written as a convex combination of behaviours which are Bell local in a bipartition. Hence, if a behaviour is Bell local in the $A/BC$ bipartition, it cannot be genuine multipartite nonlocal.

In combination with out results this has interesting consequences. Let $\oM = \{ M_{a|x} \}$ be the set of qubit measurements presented in Refs.~\cite{HirschQuintinoBrunner-noBell,BeneVertesi-noBell}. For any $N \in \NN$ and any quantum state $\rho\in \dens(\CC_2^{\otimes N})$, the behaviour $p(a_1 \ldots a_N| x_1 \ldots x_N) = \Tr[\rho (M_{a_1|x_1} \otimes \ldots \otimes M_{a_n|x_n})]$ is Bell local in every bipartition of the form $1/N-1$, that is, where one side of the partition has a single party. However, it follows from our main result that, there exists some $N \in \NN$, and a quantum state  $\rho\in \dens(\CC_2^{\otimes N})$ such that $p(a_1 \ldots a_N| x_1 \ldots x_N) = \Tr[\rho (M_{a_1|x_1} \otimes \ldots \otimes M_{a_n|x_n})]$ is Bell nonlocal, but Bell local in all $1/N-1$ partitions, hence not genuine multipartite nonlocal. Similar phenomenon has been phrased in previous literature as  \textit{anonymous nonlocality} \cite{Liang2014anonimous} and finds applications in multipartite secret sharing. Note that this phenomenenon also occurs in entanglement theory, where quantum states can be separable for any bipartition, but not fully separable \cite{bennett1999unextendible, acin2001classification, Vertesi2011MultiPartitePeres}.

\paragraph{Quantifying Bell incompatibility}
One may quantify measurement incompatibility by how robust it is against white noise~\cite{Busch-compatiblity,bavaresco2017_most_incompatible,DesignolleFarkasKaniewski-compatiblity,BluhmNechita-compatiblity} via the $\eta$ depolarising quantum channel, defined as $D_\eta(\rho) := \eta \rho + (1-\eta)  \Tr(\rho)  \I$. If $\oM=\{M_{a|x}\}$ is a set of POVMs, its noisy version is the POVM defined as $\oM^{(\eta)} := \left\{D_\eta(M_{a|x})\right\}$. The incompatibility robustness of a POVM $\oM$ is the maximum visibility parameter $\eta$ such that  $\oM^{(\eta)}$ is JM. More precisely, the maximum value of $\eta\in[0,1]$ such that $\oM^{(\eta)}$ is jointly measurable, i.e., $\eta^\text{JM}(\oM ):=\max \eta$ such that $\oM^{(\eta)}$ is JM. Since we set $\eta\in[0,1]$, a set of POVMs $\oM$ is jointly measurable if and only if $ \eta^\text{JM}(\oM )=1$.

Motivated by our main result, we introduce the definition of $k$-partite Bell joint measurability, which quantifies the critical visibility $\eta\in[0,1]$ for which a set of POVMs $\oM^{(\eta)}$ becomes Bell local in any $k$-partite scenario regardless of the shared state, where all parties perform the (same) measurements $\oM^{(\eta)}$.
\begin{definition}
\label{def:nBellJM}
Let $\oM=\{M_{a|x}\}$ be a set of POVMs acting in $\CC_d$. Its noisy $k$-partite Bell joint measurability $\etaB{k}(\oM)\in[0,1]$ is defined as $\etaB{k}(\oM):=\max \eta$ such that $\forall \rho \in \dens(\CC_d^{\otimes k})$, $\Tr[\rho (M^{(\eta)}_{a_1|x_1} \otimes \ldots \otimes M^{(\eta)}_{a_k|x_k})] $ is Bell local.
\end{definition}
Since only incompatible measurements can lead to Bell nonlocality, it follows that if $\oM$ is JM, we have $\etaB{k}(\oM) = 1$. Generally, $\eta_\text{JM}(\oM) \leq \etaB{k}(\oM)$ for any $k\in\NN$. Also, for any set of POVMs $\oM$ and any $k \in \NN$ it holds that $\etaB{k+1}(\oM) \leq \etaB{k}(\oM)$, this is true because if the marginal behaviour $\sum_{a_i} \Tr[\rho (M_{a_1|x_1} \otimes \ldots \otimes M_{a_k|x_k})]$ is Bell nonlocal, the full behaviour $\Tr[\rho (M_{a_1|x_1} \otimes \ldots \otimes M_{a_{k+1}|x_{k+1}})]$ is also nonlocal. By making use of Definition~\ref{def:nBellJM}, our main result may be reformulated as follows:
\begin{corollary} \label{thm:reformulation}
Let $\CC_2$ be the qubit Hilbert space and let $\oM = \{M_{a|x}\}$ be a finite set of incompatible POVMs. Then $\lim_{k\rightarrow \infty} \etaB{k}(\oM) = \etaJM(\oM)$.
\end{corollary}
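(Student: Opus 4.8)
The plan is to obtain Corollary~\ref{thm:reformulation} from Theorem~\ref{thm:main} together with the two monotonicity facts about $\etaB{k}$ already recorded just above its statement. First I would note that, for a fixed incompatible set $\oM$ of qubit POVMs, the sequence $\bigl(\etaB{k}(\oM)\bigr)_{k\in\NN}$ is non-increasing, since $\etaB{k+1}(\oM)\le\etaB{k}(\oM)$, and bounded below by $\etaJM(\oM)$, since $\etaJM(\oM)\le\etaB{k}(\oM)$ for every $k$. A bounded monotone sequence converges, so the limit $L:=\lim_{k\to\infty}\etaB{k}(\oM)$ exists and satisfies $L\ge\etaJM(\oM)$. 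The whole task thus reduces to proving the reverse inequality $L\le\etaJM(\oM)$.

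For this I would argue by contradiction. Suppose $L>\etaJM(\oM)$ and fix some $\eta$ with $\etaJM(\oM)<\eta<L$. By the definition of $\etaJM$, the noisy set $\oM^{(\eta)}=\{D_\eta(M_{a|x})\}$ is then incompatible, so Theorem~\ref{thm:main} applies to it and produces an $N\in\NN$ and a state $\rho\in\dens(\CC_2^{\otimes N})$ for which the behaviour
\begin{align}
\Tr[\rho (D_\eta(M_{a_1|x_1}) \otimes \ldots \otimes D_\eta(M_{a_N|x_N}))]
\end{align}
is Bell nonlocal. In particular $\oM^{(\eta)}$ does not yield only Bell-local behaviours in the $N$-partite scenario, i.e.\ $\eta$ is not among the admissible visibilities in Definition~\ref{def:nBellJM} for $k=N$.

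The step that turns this into a contradiction is a visibility-monotonicity lemma: if $\oM^{(\eta')}$ yields only Bell-local $N$-partite behaviours and $0<\eta\le\eta'$, then so does $\oM^{(\eta)}$. I would prove it from the composition rule $D_s\circ D_t=D_{st}$ for depolarising channels (each of which fixes the maximally mixed state): writing $D_\eta=D_{\eta/\eta'}\circ D_{\eta'}$ gives $D_\eta(M_{a|x})=D_{\eta/\eta'}\bigl(D_{\eta'}(M_{a|x})\bigr)$, and, since each $D_t$ is self-adjoint with respect to the Hilbert--Schmidt inner product while $D_t^{\otimes N}$ is a valid quantum channel, the behaviour generated by $\oM^{(\eta)}$ on $\rho$ coincides with the one generated by $\oM^{(\eta')}$ on the state $D_{\eta/\eta'}^{\otimes N}(\rho)$, which is Bell local by hypothesis. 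Now, because $\etaB{N}(\oM)\ge L>\eta$, there exists an admissible $\eta'\in(\eta,\etaB{N}(\oM)]$ — one for which $\oM^{(\eta')}$ is only $N$-Bell-local — and applying the lemma then forces $\oM^{(\eta)}$ to be only $N$-Bell-local as well, contradicting the previous paragraph. Hence $L=\etaJM(\oM)$.

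I do not expect a real obstacle here: all the content is carried by Theorem~\ref{thm:main}, and what remains is the monotone-convergence bookkeeping plus the elementary observation that additional depolarising noise on the measurements can be transferred onto the shared state. The only point requiring mild care is the visibility-monotonicity lemma together with the remark that $\etaB{N}(\oM)>\eta$ guarantees an admissible $\eta'$ strictly above $\eta$; this is valid whether the optimum in Definition~\ref{def:nBellJM} is attained or only approached, so no compactness argument is needed.
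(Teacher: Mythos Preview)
Your proof is correct and follows the route the paper implicitly intends: the paper offers no explicit proof of the corollary, presenting it simply as a ``reformulation'' of Theorem~\ref{thm:main} once the two monotonicity facts $\etaJM(\oM)\le\etaB{k}(\oM)$ and $\etaB{k+1}(\oM)\le\etaB{k}(\oM)$ have been recorded. You have made this precise by observing that the bounded monotone sequence converges, and by supplying the one missing ingredient the paper leaves tacit, namely the visibility-monotonicity lemma (that extra depolarising noise on the measurements can be absorbed into the state via $D_\eta=D_{\eta/\eta'}\circ D_{\eta'}$ and self-adjointness of $D_t$), which is exactly what is needed to pass from ``$\oM^{(\eta)}$ violates a Bell inequality for some $N$'' to ``$\etaB{N}(\oM)<\eta$''.
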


We now present some examples to illustrate the definition of noisy $k$-partite Bell joint measurability. Let us start with the set of POVM $\oP={P_{a|x}}$ corresponding to the Pauli measurements $Z$ and $X$. More precisely, we define $P_{0|0}:= \ketbra{0}$, $P_{1|0}:=\ketbra{1}$, $P_{0|1}:= \ketbra{+}$, $P_{1|1}:=\ketbra{-}$,where $\ket{0}, \ket{1}, \ket{+}, \ket{-}$ are the eigenvectors of the Pauli $Z$ and $X$ matrices, repectively. The white noisy robustness for a pair of Pauli measurements is $\eta_{JM}(\oP)=1/\sqrt{2}$~\cite{HeinosaariMiyaderaZiman-compatibility}, we prove that $\etaB{2}(\oP) = \frac{1}{\sqrt[4]{2}}\approx 0.8409$ in the Appendix~\ref{app:XZ}. Note that the two values do not coincide, since here Alice and Bob have to perform the same set of measurements, contrary to Ref.~\cite{WolfPerezgarciaFernandez-measIncomp}.

Using similar ideas, we can also show that $\etaB{3}(\oP) \approx 0.7938$, proving that for a pair of Pauli measurements, $\etaB{2}(\oP)>\etaB{3}(\oP)$. To attain this goal, we first notice that we know all the tight Bell inequalities in tripartite scenario where all parties perform two dichotomic measurements~\cite{Sliwa-BellIneq,RossetbancalGisin-Bellnonlocality, werner2001all, zukowski2002bell}. Then, we fix our measurements as $\eta$ white noisy pair of Pauli measurements and evaluate the eigenvalues of all possible Bell operators.

We now focus our attention to the set of so-called trine measurements presented in Ref.~\cite{BeneVertesi-noBell}, which can be defined as $T_{a|x} := \frac{1}{2} (\I + (-1)^a \vec{v}_x \cdot \vec{\sigma})$ where $\vec{v}_x = \left[\cos\left(\frac{2 \pi x}{3}\right), 0, \sin\left(\frac{2 \pi x}{3}\right)\right]$, and $\vec{v}_x \cdot \vec{\sigma}:=\cos\left(\frac{2 \pi x}{3}\right)X+\sin\left(\frac{2 \pi x}{3}\right)Z$, with $X,Z$ being Pauli matrices. Ref.~\cite{BeneVertesi-noBell} shows $\etaJM(\oT) = \frac{2}{3}$ and for $\eta^* = 0.67>\frac{2}{3}$ the set $\oT^{\eta*} = D_\eta(T_{a|x})$ is incompatible, but $\Tr[\rho_2 (T^{\eta*}_{a|x} \otimes B_{b|y})]$ is Bell local for all $\rho_2 \in \dens(\CC_2 \otimes \CC_2)$ and all sets of POVMs $\{B_{b|y}\}$. Following our main result, $\oT^{\eta*}$ is a set of measurements whose nonlocality can be superactivated by considering multipartite scenarios.

Up to relabelling, the bipartite Bell scenario with three inputs, and two outputs has only two tight Bell inequalities, CHSH inequality and the I3322 inequality \cite{Froissart-BellIneq,CollinsGisin-BellIneq}. In order to use the CHSH inequality, we only make use two out of the three measurements of the trine given by $\{T_{a|x}^\eta\}$, and direct calculation shows that the maximal eigenvalue of its associated CHSH operator is $\eta^2\sqrt{7}$.
Hence, we see that $\eta^2 \sqrt{7}>2$ iff $\eta > \frac{\sqrt{2}}{\sqrt[4]{7}} \approx 0.8694$. We can show that the I3322 inequality and its relabelling are not violated for any state $\rho$ by setting $\eta = \frac{\sqrt{2}}{\sqrt[4]{7}}$ and evaluating the eigenvalues of its associated Bell operator. This shows that $\etaB{2}(\oT) = \frac{\sqrt{2}}{\sqrt[4]{7}}$.

By making use of a computational method detailed in the Appendix~\ref{app:trine-3parties}, we could also show that when three parties are considerd, we have that $\etaB{3}(\oT)<0.8007$, hence $\etaB{3}(\oT)<\etaB{2}(\oT)$. Finally, we remark that using the methods for symmetric multipartite Bell inequalities presented in Ref.~\cite{Designolle2024Mulatipartite}, one may cerfify nonlocality in scenarios with a large number of parties $N$, hence to obtain non-trivial (numerical and analytical) lower bounds for $\etaB{N}(\mathsf{M})$ for large $N$.

\paragraph{Implications for multipartite Bell nonlocality}
It is known that the set of all qubit POVMs is jointly measurable if and only if their white noisy visibility is smaller than or equals one have,  that is, if $\oM$ be the set of all qubit POVMs, $\eta^{\text{JM}}(\oM)=1/2$~\cite{Zhang2023POVMonehalf,Renner2023POVMonehalf}. Hence, by our main theorem, it follows that, if $\eta>1/2$, there exists a $N-$partite qubit state $\rho$ and a set of measurements $\{M_{a|x}\}$ such that 
$\Tr\big[\rho D_\eta(M_{a_1|x_1}) \otimes \ldots D_\eta(M_{a_N|x_N}) \big] = \Tr\big[D_\eta^{\otimes N}(\rho) \; M_{a_1|x_1} \otimes \ldots M_{a_N|x_N} \big]$ is Bell nonlocal.
\begin{corollary}\label{cor:1/2}
If $\eta> \frac{1}{2}$, there exists a $N\in\NN$, and a $N$ qubit state $\rho\in\dens(\CC_2^{\otimes N})$  such that the state $D_\eta^{\otimes N}(\rho)$ is entangled and violates a Bell inequality.
\end{corollary}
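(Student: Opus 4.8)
The plan is to combine Theorem~\ref{thm:main} with the known value $\etaJM(\oM_{\mathrm{all}})=1/2$ for the set $\oM_{\mathrm{all}}$ of \emph{all} qubit POVMs~\cite{Zhang2023POVMonehalf,Renner2023POVMonehalf}, and to transfer the depolarising noise from the measurements onto the shared state using that $D_\eta$ is self-adjoint for the Hilbert--Schmidt inner product, $\Tr[D_\eta(A)B]=\Tr[A\,D_\eta(B)]$, as recorded in Eq.~\eqref{eq:self_adjoint}. Fix $\eta>1/2$. The first step is to pass from $\oM_{\mathrm{all}}$ to a \emph{finite} incompatible set: I would show there is a finite $\oM\subseteq\oM_{\mathrm{all}}$ with $\etaJM(\oM)<\eta$, i.e.\ with $\oM^{(\eta)}$ incompatible. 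Indeed, $\etaJM(\oM_{\mathrm{all}})$ equals the infimum of $\etaJM(\oM')$ over finite $\oM'\subseteq\oM_{\mathrm{all}}$ — the bound ``$\le$'' is immediate because enlarging a set of POVMs cannot raise its robustness, and ``$\ge$'' amounts to the compactness-type fact that joint measurability of a family of qubit POVMs follows from joint measurability of all its finite sub-families; since this infimum equals $1/2<\eta$, some finite $\oM'$ has $\etaJM(\oM')<\eta$. Most expediently, one may instead quote such a finite incompatible set directly from~\cite{Zhang2023POVMonehalf,Renner2023POVMonehalf}, where the ``$>1/2$'' direction of $\etaJM(\oM_{\mathrm{all}})=1/2$ is itself witnessed by finite sub-collections.

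Given such a finite $\oM$, note that each $D_\eta(M_{a|x})=\eta M_{a|x}+(1-\eta)\Tr(M_{a|x})\frac{\I}{2}$ is a legitimate qubit POVM element for $\eta\in[0,1]$, so $\oM^{(\eta)}=\{D_\eta(M_{a|x})\}$ is a bona fide finite set of qubit POVMs which, by construction, is incompatible. Applying Theorem~\ref{thm:main} to $\oM^{(\eta)}$ yields an $N\in\NN$ and a state $\rho\in\dens(\CC_2^{\otimes N})$ for which
\begin{align}
p(a_1\ldots a_N|x_1\ldots x_N)=\Tr\!\big[\rho\,\big(D_\eta(M_{a_1|x_1})\otimes\cdots\otimes D_\eta(M_{a_N|x_N})\big)\big]\nonumber
\end{align}
is Bell nonlocal. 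Applying Eq.~\eqref{eq:self_adjoint} on each tensor factor, this same behaviour equals $\Tr[D_\eta^{\otimes N}(\rho)\,(M_{a_1|x_1}\otimes\cdots\otimes M_{a_N|x_N})]$. Hence the state $D_\eta^{\otimes N}(\rho)$, measured with the \emph{sharp} POVMs $\oM$, reproduces a Bell-nonlocal behaviour and therefore violates a Bell inequality. Finally, any fully separable $N$-partite state $\tau=\sum_\mu q(\mu)\,\tau^\mu_1\otimes\cdots\otimes\tau^\mu_N$ produces only Bell-local behaviours — take $p_n(a_n|x_n,\mu)=\Tr[\tau^\mu_n M_{a_n|x_n}]$ in Eq.~\eqref{eq-def3} — so $D_\eta^{\otimes N}(\rho)$ cannot be fully separable and is thus entangled. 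This would establish Corollary~\ref{cor:1/2}.

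The main obstacle is the finiteness reduction in the first paragraph: Theorem~\ref{thm:main} is available only for finite sets of POVMs whereas $\oM_{\mathrm{all}}$ is infinite, so one must genuinely extract a finite incompatible witness valid for \emph{every} $\eta>1/2$ and not merely in the limit — this is where either the compactness argument or an explicit finite family from the cited works enters. The remaining ingredients (validity of the noisy POVMs, the noise-transfer identity, and the implication ``Bell nonlocal $\Rightarrow$ entangled'') are routine.
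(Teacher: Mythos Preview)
Your proposal is correct and follows essentially the same approach as the paper: use $\etaJM=1/2$ for all qubit POVMs, apply Theorem~\ref{thm:main} to a noisy incompatible set, and transfer the depolarising noise onto the state via the self-adjointness identity~\eqref{eq:self_adjoint}. You are in fact more careful than the paper in explicitly extracting a \emph{finite} incompatible sub-family (required by the hypothesis of Theorem~\ref{thm:main}) and in spelling out why Bell nonlocality of $D_\eta^{\otimes N}(\rho)$ forces it to be entangled.
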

This demonstrates that there must be quantum states whose Bell nonlocality is very robust against noise. Note that GHZ states, if the standard Mermin inequality \cite{mermin1990extreme} is used, lead to a robustness corresponding to $\eta> \frac{1}{\sqrt{2}}$ only. So, this predicted effect is stronger than the exponential violation of local realism from the Mermin inequality.

\paragraph{Discussion and open questions}
We have proven a very strong link between qubit measurement compatibility and multipartite Bell scenarios. More precisely, we have shown that all sets of incompatible measurements lead to Bell nonlocality if sufficiently many parties are considered. For this, it is enough that all parties perform exactly the same measurement. When contrasted to the bipartite case~\cite{HirschQuintinoBrunner-noBell,BeneVertesi-noBell}, our results show that the nonlocality of incompatible measurements may be superactivated by considering multipartite scenarios. Our main result induced a novel quantifier for the Bell nonlocality and the incompatibility of POVMs, based on how many parties are required to obtain Bell nonlocality, a concept which was also explored in this work.

An immediate open question is to identify the multiparticle quantum states $\rho$ that are used to prove Theorem~\ref{thm:main}. Can they be restricted to specific known families of states, such as GHZ states or Dicke states? Similarly, Corollary~\ref{cor:1/2} proves the existence of multi-qubit states with an extreme robust violation of local realism. Can these states be identified and used as a resource for information processing tasks? Or can we at least bound the number of parties $N$?

A more challenging open question would be to generalize our results beyond qubits without the assumption of non-existence of NPT bound entangled states, but this may turn out to be an extremely tough problem to solve. But more importantly, it was recently observed that even an arbitrary small amount of Bell nonlocality is a resource for device-independent quantum key distribution \cite{wooltorton2024device}. One can thus ask whether any incompatible measurements on qubits (or even qudits) can be used for some form of multipartite device-independent quantum key distribution. If this turns out to be the case, we are likely to obtain device-independent quantum key distribution protocol for many parties with high robustness to local noise.

%\begin{acknowledgments}
\paragraph{Acknowledgments}
We acknowledge Emmanuel Zambrini Cruzeiro and Flavien Hirsch, for fruitful discussions and Sébastien Designolle and Tamás Vértesi for comments in our numerical results and for pointing out references \cite{Designolle2024Mulatipartite}, and we acknowledge support from the Deutsche Forschungsgemeinschaft (DFG, German Research Foundation, project numbers 447948357 and 440958198), the Sino-German Center for Research Promotion (Project M-0294), the German Ministry of Education and Research (Project QuKuK, BMBF Grant No. 16KIS1618K), the DAAD, the Alexander von Humboldt Foundation, and the the Niedersächsisches Ministerium für Wissenschaft und Kultur.
%\end{acknowledgments}

%\bibliography{citations}

\onecolumngrid
\appendix

\numberwithin{proposition}{section}
\numberwithin{lemma}{section}

\section{Proof of the main theorem: Details for step (a)} \label{app:obs1}
A conditional probability is a set of probability distributions, usually represented as $p(a|x)$ where $a$ and $x$ are suitable indexes, we will always assume that both $a$ and $x$ belong to finite sets. Here $x$ indexes the different probability distributions and $a$ indexes elements of said probability distribution, thus we get the normalization condition $\sum_a p(a|x) = 1$ for all $x$. It is straightforward to see that conditional probability distributions for a compact convex set, thus they can be seen as a state space of a suitable generalized probabilistic theory (GPT) \cite{Plavala-review}, in fact this is exactly the boxworld GPT \cite{Barrett-GPTinformation}. We will denote this GPT and its state space by $B(n_a, n_x)$ where $n_a$ is the number of admissible values of $a$ and $n_x$ is the number of admissible values of $x$. Here and from now on we will, without the loss of generality, assume that all probability distributions contained in the conditional probability distribution $p(a|x)$ contain the same number of outcomes. Moreover we will write $\{p(a|x)\}$ when necessary when referring to the behaviour as a whole rather than to the specific number $p(a|x)$ corresponding to the specific indexes $a,x$, one can understand $\{p(a|x)\}$ as a matrix with rows labeled by $a$ and columns labeled by $x$, while $p(a|x)$ as the matrix elements of $\{p(a|x)\}$.

A behavior is a multipartite non-signaling conditional probability: multipartite since behavior is a set of probability distributions between two or more parties, thus it is represented as $p(ab|xy)$ for two parties, or in general $p(a_1 \ldots a_N| x_1 \ldots x_N)$ for $N$ parties. We will use the vector notation $p(a_1 \ldots a_N| x_1 \ldots x_N) = p(\vec{a}|\vec{x})$ in order to shorten the notation when possible. One can also show that behaviors correspond to multipartite states of the boxworld GPT.

Behavior is non-signaling meaning that it has well-defined marginals that are obtained by summing over the probability distribution of one of the parties, thus we require that
\begin{equation}
\sum_{a_i} p(a_1 \ldots a_N| x_1 \ldots x_i \ldots x_N) = \sum_{a_i} p(a_1 \ldots a_N| x_1 \ldots x'_i \ldots x_N)
\end{equation}
for all $x_i, x'_i$ and for all $i$.

The behavior $p(a_1 \ldots a_N| x_1 \ldots x_N)$ is local if it can be written as
\begin{equation} \label{eq:obs1-local}
p(a_1 \ldots a_N| x_1 \ldots x_N) = \sum_\lambda p(\lambda) p^{(1)}_\lambda(a_1|x_1) \ldots p^{(N)}_\lambda(a_N|x_N)
\end{equation}
where $p(\lambda)$ is some probability distribution and $p^{(i)}_\lambda(a_i|x_i)$ are conditional probabilities, $i \in \{1, \ldots, N\}$. Note that most literature uses the notation $p^{(i)}(a_i|x_i, \lambda)$ instead of $p^{(i)}_\lambda(a_i|x_i)$, but this is a mere notation change and the later notation is better suited for our purposes. It follows from \eqref{eq:obs1-local} by definition that a behavior is local if and only if it can be seen as a separable state in the tensor product of the respective cones and GPTs. We thus get:
\begin{proposition} \label{prop:obs1-sep}
A behavior $p(a_1 \ldots a_N| x_1 \ldots x_N)$ is Bell local if and only if it can be seen as separable element of the tensor product of the underlying cones.
\end{proposition}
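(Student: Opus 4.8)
The plan is to unfold the two notions appearing in the statement and observe that they coincide essentially term by term. Let $C(n_a,n_x)$ denote the cone generated by the boxworld state space $B(n_a,n_x)$, that is, the set of non-negative multiples of conditional probability distributions, equipped with the normalization functional $u(\{q(a|x)\}):=\sum_a q(a|x)$, which is well defined on $C(n_a,n_x)$ because $\sum_a q(a|x)$ does not depend on $x$. The composite of $N$ parties is then the minimal (projective) tensor product $C(n_a,n_x)^{\otimes N}$, whose \emph{separable} elements are by definition the conic combinations $\sum_\lambda \mu_\lambda\, v^{(1)}_\lambda \otimes \cdots \otimes v^{(N)}_\lambda$ with $\mu_\lambda \geq 0$ and $v^{(i)}_\lambda \in C(n_a,n_x)$, and whose states are the normalized elements, $u^{\otimes N}(\cdot)=1$.

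First I would treat the ``if'' direction. Suppose the behavior $\{p(\vec a|\vec x)\}$ is a separable state of $C(n_a,n_x)^{\otimes N}$, written as $\sum_\lambda \mu_\lambda\, v^{(1)}_\lambda \otimes \cdots \otimes v^{(N)}_\lambda$. Rescaling each factor, I may assume $u(v^{(i)}_\lambda)=1$, i.e. each $v^{(i)}_\lambda$ is a genuine conditional probability $p^{(i)}_\lambda(a_i|x_i)$, the rescaling constants being absorbed into the $\mu_\lambda$; then $u^{\otimes N}\big(\sum_\lambda \mu_\lambda \bigotimes_i v^{(i)}_\lambda\big)=\sum_\lambda \mu_\lambda=1$, so $\{\mu_\lambda\}$ is a probability distribution. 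Reading off the matrix elements of a product vector, $\big(\bigotimes_i p^{(i)}_\lambda\big)(\vec a|\vec x)=\prod_i p^{(i)}_\lambda(a_i|x_i)$, reproduces exactly the decomposition in Eq.~\eqref{eq:obs1-local}, so the behavior is Bell local. Conversely, a Bell local behavior is by Eq.~\eqref{eq:obs1-local} of the form $\sum_\lambda p(\lambda)\prod_i p^{(i)}_\lambda(a_i|x_i)$, and reading this back as $\sum_\lambda p(\lambda)\, p^{(1)}_\lambda \otimes \cdots \otimes p^{(N)}_\lambda$ exhibits it as a convex combination of product elements of $C(n_a,n_x)^{\otimes N}$, hence a separable element; it is normalized since each $p^{(i)}_\lambda$ is and $\sum_\lambda p(\lambda)=1$. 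One may also note, for consistency, that such an element is automatically a valid non-signaling behavior, as a product of normalized conditional probabilities is non-signaling and non-signaling behaviors form a convex set, but this is not needed for the equivalence.

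I do not expect a genuine obstacle here: the proposition is a dictionary entry translating the Bell-scenario language into the GPT language. The only points that require care are (i) choosing the correct tensor product of cones --- it must be the minimal one, since ``separable'' is meant to be the convex hull of product states, not the full non-signaling set, which corresponds instead to the maximal tensor product --- and (ii) the normalization bookkeeping, namely moving scalar factors between the product tensor factors and the weights $\mu_\lambda$ so that a separable \emph{state} becomes literally a convex combination of product \emph{states}. Both are routine once the cones and their order units $u$ are fixed, so the proof is short.
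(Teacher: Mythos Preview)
Your proposal is correct and follows essentially the same elementary route as the paper: both arguments identify the elementwise decomposition in Eq.~\eqref{eq:obs1-local} with the tensor decomposition of a separable vector in the product of boxworld cones, and verify the two directions by reading off coordinates. The paper phrases the coordinate-reading step via the dual functionals $f_{a|x}$ and the fact that their tensor products span the dual, whereas you phrase it as directly evaluating the matrix entries of a product vector; you also make the normalization bookkeeping (absorbing scalars into the weights $\mu_\lambda$) more explicit than the paper does, but this is a cosmetic difference rather than a genuinely different strategy.
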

\begin{proof}
Here we state the formal proof; the result follows almost immediately from \eqref{eq:obs1-local}. Given a conditional probability $\{p(a|x)\}$ we can see the linear map $f_{a|x}: \{p(a|x)\}$ as element of the cone dual to the cone of conditional probabilities. We will use this insight to prove that \eqref{eq:obs1-local} holds if and only if
\begin{equation} \label{eq:obs1-localTensor}
\{p(a_1 \ldots a_N| x_1 \ldots x_N)\} = \sum_\lambda p(\lambda) \{p^{(1)}_\lambda(a_1|x_1)\} \otimes \ldots \otimes \{p^{(N)}_\lambda(a_N|x_N)\},
\end{equation}
the result follows from
\begin{align}
(f_{a_1|x_1} \otimes \ldots \otimes f_{a_N|x_N})(\{p(a_1 \ldots a_N| x_1 \ldots x_N)\}) &= p(a_1 \ldots a_N| x_1 \ldots x_N) \\
(f_{a_1|x_1} \otimes \ldots \otimes f_{a_N|x_N})(\{p^{(1)}_\lambda(a_1|x_1)\} \otimes \ldots \otimes \{p^{(N)}_\lambda(a_N|x_N)\}) &= p^{(1)}_\lambda(a_1|x_1) \ldots p^{(N)}_\lambda(a_N|x_N)
\end{align}
and from the fact that $f_{a_1|x_1} \otimes \ldots \otimes f_{a_N|x_N}$ contains a basis of the dual vector space. This immediately yields that \eqref{eq:obs1-local} follows from \eqref{eq:obs1-localTensor}.

If \eqref{eq:obs1-local} holds, then we can construct the sum $\sum_\lambda p(\lambda) \{p^{(1)}_\lambda(a_1|x_1)\} \otimes \ldots \otimes \{p^{(N)}_\lambda(a_N|x_N)\}$ and \eqref{eq:obs1-localTensor} must hold simply because the two vectors yield the same value when evaluated on all possible $f_{a_1|x_1} \otimes \ldots \otimes f_{a_N|x_N}$.
\end{proof}

Any behaviour that is not of the form given by equation \eqref{eq:obs1-local} is called nonlocal. Given a nonlocal behaviour $p(\vec{a}|\vec{x})$ there exists a linear functional $W$ such that $W(\{p(\vec{a}|\vec{x})\}) > 0$ but such that $W$ is upper bounded by $0$ for all local behaviours; the existence of $W$ follows from the hyperplane separation theorem \cite[Section 2.5.1]{BoydVandenberghe-convex} since the set of local behaviours is convex and closed. One can understand $W$ either as an entanglement witness in the boxworld GPT, or as a Bell inequality. It thus follows that for every nonlocal behaviour $p(\vec{a}|\vec{x})$ there is some Bell inequality which is violated by $p(\vec{a}|\vec{x})$; this is in fact a known result \cite{BrunnerCavalcantiPironioScaraniWehner-BellNonlocality}.

\section{Proof of the main theorem: Details for steps (b) and (c)} \label{app:obs2}
In this section we will first introduce the concept of separable map, and then we will prove two lemmas that will be instrumental in the final proof, the main result of this section is Proposition~\ref{prop:obs2-sep}. The proof technique we use in this section is based on the proof of \cite[Lemma 2.8]{AubrunMullerhermes-anihilation}.

In this section we will be dealing with linear maps of the form $\Psi_1: \lin(\CC_d) \to \linspan(B(n_a, n_x))$ and $\Psi_2: \linspan(B(n_a, n_x)) \to \lin(\CC_d)$. Here, $B(n_a, n_x)$ denote the set of behaviours with $n_x$ measurements and $n_a$ outcomes. We will say that $\Psi_1$ is positive if for every $\rho \in \dens(\CC_d)$ we have $\Psi_1(\rho) \in B(n_a, n_x)$, in this case we also write $\Psi_1: \dens(\CC_d) \to B(n_a, n_x)$ and we say that $\Psi_2$ is positive if for every $\{p(a|x)\} \in B(n_a, n_x)$ we have $\Psi_2(\{p(a|x)\}) \in \dens(\CC_d)$ and we write $\Psi_2: B(n_a, n_x) \to \dens(\CC_d)$. These notions of positivity are based on the same idea as the standard notion of positivity of quantum maps, that is, to saying that a linear map $\Psi_3: \lin(\CC_d) \to \lin(\CC_d)$ is positive if for very $\rho \in \dens(\CC_d)$ we have $\Psi_3(\rho) \in \dens(\CC_d)$, but since $B(n_a, n_x) \neq \dens(\CC_d)$, these notions are strictly different.

Let $\oM = \{M_{a|x}\}$ be a finite set of POVMs. We will, without the loss of generality, assume that all POVMs in the given collection contain outcomes $n_a$ outcomes and that $\oM$ consists of $n_x$ POVMs. Given a state $\rho \in \dens(\CC_d)$, we can measure $\rho$ using $\oM$ and obtain the conditional probability $p(a|x) = \Tr(M_{a|x} \rho)$. Thus $\oM$ is a positive linear map from $\dens(\CC_d)$ to $B(n_a, n_x)$.

Using  known theory of linear maps \cite{Ryan-tensorProducts}, one can see that $\oM$ corresponds to a vector in the tensor product vector space $\lin(\CC_d) \otimes \linspan(B(n_a, n_x))$. This can be seen as a generalized version of the Choi-Jamio{\l}kowski isomorphism.

Indeed, there are some operators $F_\alpha \in \lin(\CC_d)$ and conditional probability distributions $p_\alpha(a|x)$ indexed by a suitable index $\alpha$ such that
\begin{equation} \label{eq:obs2-Mtensor}
M_{a|x} = \sum_{\alpha} F_\alpha p_\alpha(a|x)
\end{equation}
and we have $\Tr(M_{a|x} \rho) = \sum_\alpha \Tr(F_\alpha \rho) p_\alpha(a|x)$ for $\rho \in \dens(\CC_d)$. We have omitted the tensor product in \eqref{eq:obs2-Mtensor}, i.e., we use $F_\alpha p_\alpha(a|x)$ instead of $F_\alpha \otimes p_\alpha(a|x)$, since if $a, x$ are fixed, then $p_\alpha(a|x)$ is just a number. Note that $F_\alpha$ are in general not positive semidefinite. For a given set of POVMs one can find the representation given by \eqref{eq:obs2-Mtensor} as follows: fix $\{p_\alpha(a|x)\}$ that form a linear basis of $\linspan(B(n_a, n_x))$, then for any $\rho \in \dens(\CC_d)$ we have $\Tr(M_{a|x} \rho) = \sum_\alpha f_\alpha(\rho) p_\alpha(a|x)$. Since $f_\alpha(\rho)$ are linear functions of $\rho$, it follows that there are operators $F_\alpha$ such that $f_\alpha(\rho) = \Tr(F_\alpha \rho)$.

A special case of $\oM$ is when the corresponding vector in the tensor product vector space is separable, that is when we have
\begin{equation} \label{eq:obs2-Mseparable}
M_{a|x} = \sum_{\lambda} G_\lambda p_\lambda(a|x),
\end{equation}
where now $G_\lambda$ are positive semidefinite. If $\oM = \{M_{a|x}\}$ is of the form given by \eqref{eq:obs2-Mseparable} then we say that it is a separable map. We will now show that if $\oM$ is {\it not} a separable map, then it gives rise to a positive but {\it not} completely positive map $\Phi: \lin(\CC_d) \to \lin(\CC_d)$.

\begin{lemma} \label{lemma:obs2-Phi-PnotCP}
Let $\oM$ be a set of POVMs such that $\oM$ is not separable map, then there is positive map $\tilde{Q}: B(n_a, n_x) \to \lin(\CC_d)$ such that the map $\Phi = \tilde{Q} \circ \oM: \lin(\CC_d) \to \lin(\CC_d)$ is positive but not completely positive.
\end{lemma}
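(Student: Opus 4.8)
\emph{Proof plan.} The idea is to run the generalized-probabilistic-theory analogue of the Horodecki criterion: an element that is not separable is detected by a positive but not completely positive map. The ``bipartite state'' here is the tensor $C_{\oM}\in\lin(\CC_d)\otimes\linspan(B(n_a,n_x))$ attached to $\oM$ by the generalized Choi--Jamio{\l}kowski isomorphism of \eqref{eq:obs2-Mtensor}; in the standard normalization $C_{\oM}=\sum_{i,j}\ket{i}\bra{j}\otimes\oM(\ket{i}\bra{j})$. Expanding a decomposition of the form \eqref{eq:obs2-Mseparable} shows that $\oM$ is a separable map if and only if $C_{\oM}$ lies in the cone $\mathrm{SEP}:=\mathrm{cone}\{E\otimes b:\ E\in\lin(\CC_d),\ E\geq 0,\ b\in B(n_a,n_x)\}$, the transpose inherent to the Choi picture merely replacing $G_\lambda$ by the still positive semidefinite $G_\lambda^T$.

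First I would check that $\mathrm{SEP}$ is a closed convex cone: it is the cone over the convex hull of $\{E\otimes b:\ \Tr E=1,\ E\geq 0,\ b\in B(n_a,n_x)\}$, which is compact and, applying the functional $b\mapsto\sum_a b(a|x_0)$ to the second leg, is seen to avoid the origin, so the cone over it is closed. Since by hypothesis $\oM$ is not a separable map, $C_{\oM}\notin\mathrm{SEP}$, and the separating hyperplane theorem supplies a linear functional $W$ on $\lin(\CC_d)\otimes\linspan(B(n_a,n_x))$ with $W(C_{\oM})<0$ and $W(S)\geq 0$ for all $S\in\mathrm{SEP}$.

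Next I would convert $W$ into the map $\tilde Q$. Trace duality on the first tensor leg identifies $W$ with a unique linear map $Q_0:\linspan(B(n_a,n_x))\to\lin(\CC_d)$ via $W(X\otimes v)=\Tr[X\,Q_0(v)]$. The inequality $W(E\otimes b)\geq 0$ for all $E\geq 0$ and all behaviours $b$ forces $\Tr[E\,Q_0(b)]\geq 0$ for all $E\geq 0$, hence $Q_0(b)\geq 0$ for every behaviour $b$; thus $Q_0$ is a positive map. Put $\tilde Q:=\theta\circ Q_0$ with $\theta$ the transpose on $\lin(\CC_d)$; since the transpose preserves positivity, $\tilde Q$ is again positive (sending behaviours to positive semidefinite operators, which after a harmless global rescaling are subnormalized states), so $\Phi:=\tilde Q\circ\oM$ is positive, being a composition of positive maps. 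To see $\Phi$ is not completely positive, note its Choi operator is $C_\Phi=(\id\otimes\tilde Q)(C_{\oM})$ and complete positivity of $\Phi$ is equivalent to $C_\Phi\geq 0$. Writing $C_{\oM}=\sum_k X_k\otimes v_k$ and using $\bra{\Omega}(A\otimes B)\ket{\Omega}=\Tr(AB^T)$ for $\ket{\Omega}=\sum_i\ket{i}\ket{i}$, the transpose in $\tilde Q=\theta\circ Q_0$ cancels and
\begin{equation}
\bra{\Omega}C_\Phi\ket{\Omega}=\sum_k\Tr\!\big[X_k\big(Q_0(v_k)^T\big)^{T}\big]=\sum_k\Tr[X_k\,Q_0(v_k)]=W(C_{\oM})<0 ,
\end{equation}
so $C_\Phi\not\geq 0$ and $\Phi$ is not completely positive, which proves the lemma.

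I expect the delicate point to be exactly this last identity: the two Choi-type identifications — the one producing $C_{\oM}$ from $\oM$ and the one producing $\tilde Q$ from $W$ — must be set up with compatible transpose conventions, so that $W(C_{\oM})<0$ is read as the value of $C_\Phi$ on the positive vector $\ketbra{\Omega}$ rather than on the indefinite swap operator; pairing with the swap operator would be perfectly consistent with $C_\Phi\geq 0$ and would prove nothing. The remaining routine points are the closedness of $\mathrm{SEP}$ (needed for the separating functional to exist) and the minor normalization involved in presenting $\tilde Q$ as a map into $\dens(\CC_d)$.
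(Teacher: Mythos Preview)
Your proof is correct and follows essentially the same route as the paper: separate the non-separable tensor associated with $\oM$ from the separable cone, reinterpret the witness functional as a positive map $\tilde Q:B(n_a,n_x)\to\lin(\CC_d)$, and show that the Choi matrix of $\Phi=\tilde Q\circ\oM$ evaluated on the maximally entangled vector returns the negative witness value. The only cosmetic difference is that you work with the Choi element $C_{\oM}=\sum_{i,j}\ket{i}\bra{j}\otimes\oM(\ket{i}\bra{j})$ and therefore insert an explicit transpose $\theta$ into $\tilde Q$ to align conventions, whereas the paper uses the Jamio{\l}kowski-type identification $\oM\leftrightarrow\sum_\alpha F_\alpha\otimes\{p_\alpha(a|x)\}$ of \eqref{eq:obs2-Mtensor}, in which the two hidden transposes cancel automatically; your added care about closedness of $\mathrm{SEP}$ is a detail the paper leaves implicit.
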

\begin{proof}
Since convex combination of separable vectors is again separable, it is straightforward to check that the set of collections of POVMs with $n_a$ outcomes and $n_x$ measurements that are separable maps is convex. So, by the hyperplane separation theorem \cite[Section 2.5.1]{BoydVandenberghe-convex} it follows that there is a witness-type operator. Concretely, if $\oM$ is not a separable map, then there exists a linear functional $Q$ on $\lin(\CC_d) \otimes \linspan(B(n_a, n_x))$ such that $Q$ is positive on all collections of POVMs that are separable maps and $Q(\oM) < 0$. Since all of the vector spaces in question are finite-dimensional, it follows that we can express $Q$ as
\begin{equation} \label{eq:obs2-Qtensor}
Q = \sum_\beta X_\beta \otimes W_\beta
\end{equation}
where $X_\beta$ are linear functionals on $\lin(\CC_d)$ and $W_\beta$ are linear functionals on conditional probabilities. Moreover using the self-duality of $\lin(\CC_d)$ we get $X_\beta \in \lin(\CC_d)$. We obtain
\begin{equation} \label{eq:obs2-QonM}
0 > Q(\oM) = \sum_{\alpha, \beta} \Tr(F_\alpha X_\beta) W_\beta(\{p_\alpha(a|x)\})
\end{equation}
where we have used the decomposition of $\oM$ as given by \eqref{eq:obs2-Mtensor}. Note that the final result does not depend on $a, x$ because the functionals $W_\beta$ acts on the space of conditional probabilities.

Note that one can also see $Q$ as a positive map $\tilde{Q}: B(n_a, n_x) \to \lin(\CC_d)$ given as
\begin{equation}
\tilde{Q}(\{p(a|x)\}) = \sum_\beta W_\beta(\{p(a|x)\}) X_\beta.
\end{equation}
To check that $\tilde Q$ is positive note that for $E \geq 0$ we have
\begin{equation}
\Tr(\tilde{Q}(\{p(a|x)\}) E) = Q(\{p(a|x)\} \otimes E) \geq 0
\end{equation}
where we have used that $Q$ is positive on separable maps.

Given $\oM$ and $Q$ we can construct the map $\Phi: \lin(\CC_d) \to \lin(\CC_d)$ given as $\Phi = \tilde{Q} \circ \oM$, i.e., $\Phi(\rho) = \tilde{Q}(\oM(\rho))$ for $\rho \in \dens(\CC_d)$. Equivalently, using \eqref{eq:obs2-Mtensor} and \eqref{eq:obs2-Qtensor}, one can construct $\Phi$ as a linear map corresponding to the tensor $\sum_{\alpha, \beta} W_\beta(\{p_\alpha(a|x)\}) F_\alpha \otimes X_\beta$, that is, for $\rho \in \dens(\CC_d)$ we have
\begin{equation}
\Phi(\rho) = \sum_{\alpha, \beta} W_\beta(\{p_\alpha(a|x)\}) \Tr(F_\alpha \rho) X_\beta.
\end{equation}
We will now prove that $\Phi$ is positive, but not completely positive map. One can already see that $\Phi$ is positive since $\oM$ and $\tilde{Q}$ are positive, but nevertheless we provide an explicit proof. To see that $\Phi$ is positive, note that for $E \geq 0$ we have
\begin{align}
\Tr(\Phi(\rho) E) &= \sum_{\beta} W_\beta\left( \sum_\alpha \{p_\alpha(a|x)\} \Tr(F_\alpha \rho) \right) \Tr(X_\beta E) \\
&= \sum_{\beta} W_\beta \left( \{\Tr(M_{a|x} \rho)\} \right) \Tr(X_\beta E) \\
&= Q(\{\Tr(M_{a|x} \rho)\} \otimes E)  \geq 0
\end{align}
simply because $\Tr(M_{a|x} \rho)$ is a valid conditional probability and $Q$ is positive on all separable maps.

To prove that $\Phi$ is not completely positive, we will prove that the Choi matrix of $\Phi$ is not positive semidefinite. Let $\ket{i}$ be an orthonormal basis of $\CC_d$ and let $\ket{\phi^+} = \frac{1}{d} \sum_i \ket{ii}$ be the maximally-entangled state. We then define the Choi matrix of $\Phi$ as $\J(\Phi) = (\Phi \otimes \id)(\ketbra{\phi^+})$, we have
\begin{equation}
\J(\Phi) = \sum_{\alpha, \beta} W_\beta(\{p_\alpha(a|x)\}) X_\beta \otimes \Tr_1((F_\alpha \otimes \I) \ketbra{\phi^+}).
\end{equation}
and
\begin{align}
\dim(\CC_d)^2 \Tr(\J(\Phi) \ketbra{\phi^+}) &= \dim(\CC_d)^2 \sum_{\alpha, \beta} W_\beta(\{p(a|x,\alpha)\}) \Tr( (X_\beta \otimes \Tr_1((F_\alpha \otimes \I) \ketbra{\phi^+})) \ketbra{\phi^+}) \\
&= \dim(\CC_d) \sum_{\alpha, \beta} W_\beta(\{p(a|x,\alpha)\}) \Tr(X_\beta^\intercal \Tr_1((F_\alpha \otimes \I) \ketbra{\phi^+})) \\
&= \dim(\CC_d) \sum_{\alpha, \beta} W_\beta(\{p(a|x,\alpha)\}) \Tr((F_\alpha \otimes X_\beta^\intercal) \ketbra{\phi^+})) \\
&= \sum_{\alpha, \beta} W_\beta(\{p(a|x,\alpha)\}) \Tr(F_\alpha X_\beta) < 0,
\end{align}
where we have used \eqref{eq:obs2-QonM} and $\dim(\CC_d) \Tr((A \otimes B) \ketbra{\phi^+}) = \Tr(A^\intercal B)$ where $A^\intercal$ is the transpose of $A$. This shows that $\J(\Phi)$ is not positive semidefinite, thus $\Phi$ is not completely positive.
\end{proof}

Given a finite set of POVMs $\oM$ and a multipartite quantum state $\rho_N \in \dens(\CC_d^{\otimes N})$, $N \in \NN$, we can measure $M_{a|x}$ at every local Hilbert space, obtaining the behavior $\Tr( (\otimes_{i=1}^N M_{a_i|x_i}) \rho_N)$. We will now prove that if this behavior is always local then for any positive linear map $\tilde{R}: B(n_a, n_x) \to \lin(\CC_d)$ the linear map $\Psi = \tilde{R} \circ \oM$ is positive and locally entanglement annihilating \cite{MoravcikovaZiman-LEA}.
\begin{lemma} \label{lemma:obs2-Phi-LEA}
Let $\oM$ be a set of POVMs such that $\oM^{\otimes N}(\rho_N)$ is separable behavior for all $N \in \NN$ and all $\rho_N \in \dens(\CC_d^{\otimes N})$. Then for any positive linear map $\tilde{R}: B(n_a, n_x) \to \lin(\CC_d)$ we have that $\Psi = \tilde{R} \circ \oM$ is locally entanglement annihilating, i.e., $\Psi^{\otimes N}(\rho_N) \geq 0$ and $\Psi^{\otimes N}(\rho_N)$ is separable quantum state for all $N \in \NN$ and all $\rho_N \in \dens(\CC_d^{\otimes N})$.
\end{lemma}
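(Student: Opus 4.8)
The plan is to transport a Bell-local (equivalently, separable) decomposition of the behaviour $\oM^{\otimes N}(\rho_N)$ through the positive map $\tilde{R}$ one party at a time. First I would fix $N \in \NN$ and $\rho_N \in \dens(\CC_d^{\otimes N})$. By hypothesis $\oM^{\otimes N}(\rho_N)$ is a separable behaviour, so Proposition~\ref{prop:obs1-sep} together with the tensor-product form \eqref{eq:obs1-localTensor} supplies a probability distribution $p(\lambda)$ and single-party behaviours $\{p^{(i)}_\lambda(a_i|x_i)\} \in B(n_a, n_x)$ with
\begin{equation}
\oM^{\otimes N}(\rho_N) = \sum_\lambda p(\lambda)\, \{p^{(1)}_\lambda(a_1|x_1)\} \otimes \ldots \otimes \{p^{(N)}_\lambda(a_N|x_N)\}.
\end{equation}

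Next I would apply $\tilde{R}^{\otimes N}$ to both sides. On the left, the functorial identity $\tilde{R}^{\otimes N} \circ \oM^{\otimes N} = (\tilde{R} \circ \oM)^{\otimes N} = \Psi^{\otimes N}$ (checked on simple tensors and extended by linearity) gives $\Psi^{\otimes N}(\rho_N)$. On the right, multilinearity of $\tilde{R}^{\otimes N}$ yields
\begin{equation}
\Psi^{\otimes N}(\rho_N) = \sum_\lambda p(\lambda)\, \tilde{R}(\{p^{(1)}_\lambda(a_1|x_1)\}) \otimes \ldots \otimes \tilde{R}(\{p^{(N)}_\lambda(a_N|x_N)\}).
\end{equation}
Since $\tilde{R}$ is positive, $\tilde{R}(\{p^{(i)}_\lambda(a_i|x_i)\}) \in \dens(\CC_d)$ for every $i$ and $\lambda$, so the right-hand side is a convex combination of product density matrices. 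Hence $\Psi^{\otimes N}(\rho_N)$ is a separable quantum state, and in particular $\Psi^{\otimes N}(\rho_N) \geq 0$; as $N$ and $\rho_N$ were arbitrary, this is precisely the assertion that $\Psi$ is locally entanglement annihilating.

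The only places that require (routine) care are the bookkeeping: verifying that $\tilde{R}^{\otimes N}$ genuinely acts termwise on the separable decomposition — which holds because each $\{p^{(i)}_\lambda(a_i|x_i)\}$ lies in $\linspan(B(n_a,n_x))$, the domain on which $\tilde{R}$ is linear — and recalling that positivity of $\tilde{R}$ in the sense of the preceding discussion is exactly the statement that behaviours are mapped to density matrices. I do not expect a genuine obstacle here: essentially all of the content is already packaged in Proposition~\ref{prop:obs1-sep}, which converts Bell locality into cone separability, after which separability of quantum states is manifestly preserved by applying the positive map $\tilde{R}$ to each local system.
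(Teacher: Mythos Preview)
Your proposal is correct and essentially identical to the paper's own proof: both start from the separable decomposition of $\oM^{\otimes N}(\rho_N)$ as a tensor product of local behaviours, apply $\tilde{R}^{\otimes N}$ using $\tilde{R}^{\otimes N}\circ\oM^{\otimes N}=\Psi^{\otimes N}$, and conclude separability of $\Psi^{\otimes N}(\rho_N)$ from positivity of $\tilde{R}$. The only difference is that you explicitly invoke Proposition~\ref{prop:obs1-sep} and spell out the bookkeeping, whereas the paper does this more tersely.
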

\begin{proof}
Assume that for every $N \in \NN$ and every state $\rho_N \in \dens(\CC_2^{\otimes N})$ the behavior $\Tr( (\otimes_{i=1}^N M_{a_i|x_i}) \rho_N)$ is local, i.e., separable, so we have
\begin{equation} \label{eq:obs2-Phi-LEA-sepElement}
\Tr( (\otimes_{i=1}^N M_{a_i|x_i}) \rho_N) = \sum_\lambda p(\lambda) p^{(1)}_\lambda(a_1|x_1) \ldots p^{(N)}_\lambda(a_N|x_N)
\end{equation}
which is the same as
\begin{equation} \label{eq:obs2-Phi-LEA-sepTensor}
\oM^{\otimes N}(\rho_N) = \sum_\lambda p(\lambda) \{p^{(1)}_\lambda(a_1|x_1)\} \otimes \ldots \otimes \{p^{(N)}_\lambda(a_N|x_N)\}
\end{equation}
since \eqref{eq:obs2-Phi-LEA-sepTensor} is a matrix equality while \eqref{eq:obs2-Phi-LEA-sepElement} is equality all entries of the corresponding matrices. We then have
\begin{align}
\Psi^{\otimes N}(\rho_N) &= \tilde{R}^{\otimes N}(\oM^{\otimes N}(\rho_N)) \\
&= \sum_\lambda p(\lambda) \tilde{R}(\{p^{(1)}_\lambda(a_1|x_1)\}) \otimes \ldots \otimes \tilde{R}(\{p^{(N)}_\lambda(a_N|x_N)\})
\end{align}
and we get that $\Psi^{\otimes N}(\rho_N)$ is separable since $\tilde{R}$ is positive and thus we have $\tilde{R}(\{p^{(i)}_\lambda(a_i|x_i)\}) \in \dens(\CC_d)$.
\end{proof}

We are now prepared to state the main result of this section:
\begin{proposition} \label{prop:obs2-sep}
Let $\dim(\CC_2) = 2$. Let $\oM = \{M_{a|x}\}$ be a finite set of POVMs on $\CC_2$, so $M_{a|x} \in \lin(\CC_2)$. If $\oM^{\otimes N}(\rho_N)$ is separable behavior for all $N \in \NN$ and all $\rho_N \in \dens(\CC_d^{\otimes N})$ then $\oM$ is a separable map.
\end{proposition}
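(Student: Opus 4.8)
The plan is to argue by contradiction, feeding the two lemmas already proven in this section into a structural fact about positive maps on qubits. Suppose $\oM$ is \emph{not} a separable map. Then Lemma~\ref{lemma:obs2-Phi-PnotCP} produces a positive linear map $\tilde{Q}\colon B(n_a,n_x)\to\dens(\CC_2)$ such that $\Phi:=\tilde{Q}\circ\oM\colon\dens(\CC_2)\to\dens(\CC_2)$ is positive but \emph{not} completely positive. On the other hand, the hypothesis of the Proposition is verbatim the hypothesis of Lemma~\ref{lemma:obs2-Phi-LEA}, so applying that lemma with the choice $\tilde{R}:=\tilde{Q}$ shows that this \emph{same} map $\Phi=\tilde{Q}\circ\oM$ is locally entanglement annihilating: $\Phi^{\otimes N}(\rho_N)\geq 0$ and is separable for every $N\in\NN$ and every $\rho_N\in\dens(\CC_2^{\otimes N})$. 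Hence the Proposition reduces to the qubit-specific claim of step (c): a positive, locally entanglement annihilating map $\Phi\colon\lin(\CC_2)\to\lin(\CC_2)$ must be completely positive. This directly contradicts $\Phi$ not being CP, so the assumption that $\oM$ is not separable is untenable, and $\oM$ is a separable map.

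It remains to argue the claim. Note first that, with the convention used here, ``positive'' already includes trace preservation (since $\oM$ sends states to normalised behaviours and $\tilde{Q}$ sends behaviours to density matrices, $\Phi$ maps $\dens(\CC_2)$ into $\dens(\CC_2)$), so $\Phi$ is a positive channel on a qubit. By the structure theorem for positive maps on $\lin(\CC_2)$ (St\o{}rmer--Woronowicz decomposability) we may write $\Phi=\Lambda_1+\Lambda_2\circ T$ with $\Lambda_1,\Lambda_2$ completely positive and $T$ the transpose; $\Phi$ is CP iff the decomposition can be taken with $\Lambda_2=0$, so assume $\Lambda_2\neq 0$ and aim for a contradiction. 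The tool is that separable outputs are in particular PPT: from $\Phi^{\otimes 2}(\rho_2)$ separable, partial transposition on the second factor shows $\Phi\otimes(T\circ\Phi)$ is a positive map, and iterating, every operator obtained from $\Phi^{\otimes N}$ by transposing an arbitrary subset of the factors is both positive and separable. One then shows that if $\Lambda_2\neq 0$ the co-completely-positive content of $\Phi$ survives in the tensor power and eventually forces a violation of one of these constraints for large enough $N$. Concretely, this is where one invokes and strengthens the fact that for qubit channels entanglement breaking and (local) entanglement annihilating coincide~\cite{AubrunMullerhermes-anihilation}, together with the composition techniques of Ref.~\cite{ChristandlMullerhermesWolf-composedETB}: combining a St\o{}rmer decomposition with those tools yields $\Lambda_2=0$, i.e.\ $\Phi$ is CP (indeed entanglement breaking).

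The main obstacle is exactly this last step. The existing results about entanglement-annihilating maps are phrased for CP channels, whereas here $\Phi$ is only positive, so one must genuinely handle the co-completely-positive summand of the decomposition and control how it propagates under tensor products; this is the ``extension of Ref.~\cite{ChristandlMullerhermesWolf-composedETB}'' advertised in step (c) and the only place where $\dim=2$ is essential. Everything else is routine bookkeeping already packaged in Lemmas~\ref{lemma:obs2-Phi-PnotCP} and~\ref{lemma:obs2-Phi-LEA}, so the proof of the Proposition itself is short once the qubit map-theoretic statement is in hand (and that statement can equally well be isolated and proven as a standalone lemma).
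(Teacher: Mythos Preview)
Your reduction is exactly the paper's: assume $\oM$ is not separable, invoke Lemma~\ref{lemma:obs2-Phi-PnotCP} to obtain $\Phi=\tilde Q\circ\oM$ positive but not CP, invoke Lemma~\ref{lemma:obs2-Phi-LEA} with $\tilde R:=\tilde Q$ to see that this same $\Phi$ is locally entanglement annihilating, and derive a contradiction from the qubit claim ``positive $+$ LEA $\Rightarrow$ CP''. The paper isolates that claim verbatim as a standalone lemma (Lemma~\ref{lemma:obs2-qubit-LeaEB}), just as you suggest in your last sentence, so the proof of the Proposition itself matches.

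Where you diverge is in your sketch of the qubit claim. You propose St\o{}rmer--Woronowicz decomposability, $\Phi=\Lambda_1+\Lambda_2\circ T$, and then try to force $\Lambda_2=0$ via PPT/separability constraints on tensor powers. The paper's argument is different and cleaner: LEA in particular means $\Phi^{\otimes N}$ is positive for every $N$, i.e.\ $\Phi$ is \emph{tensor-stable positive}, and a theorem of M\"uller-Hermes, Reeb and Wolf~\cite{MullerhermesReebWolf-tensorPositive} says that on qubits any tensor-stable positive map is either CP \emph{or} completely co-positive---a genuine dichotomy, not merely a decomposition into a sum. In the co-CP case $\Phi^\intercal$ is CP and still LEA, so \cite[Lemma~3.2]{ChristandlMullerhermesWolf-composedETB} applies directly to $\Phi^\intercal$ and gives that $\Phi^\intercal$ (hence $\Phi$) is entanglement breaking, in particular CP. This bypasses the weak point in your sketch: St\o{}rmer--Woronowicz only produces a \emph{sum} $\Lambda_1+\Lambda_2\circ T$, and your assertion that ``the co-CP content survives in the tensor power and eventually forces a violation'' is not substantiated---the cross terms in $(\Lambda_1+\Lambda_2\circ T)^{\otimes N}$ make this delicate, and it is not clear your line closes without essentially reproving the tensor-stable-positivity dichotomy anyway.
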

\begin{proof}
We prove the result by contradiction: assume that $\oM$ is not a separable map such that $\oM^{\otimes N}(\rho_N)$ is separable behavior for all $N \in \NN$ and all $\rho_N \in \dens(\CC_d^{\otimes N})$. Then, according to Lemma~\ref{lemma:obs2-Phi-PnotCP}, there is a positive but not completely positive map $\Phi = \tilde{Q} \circ \oM$, where $\tilde{Q}: B(n_a, n_x) \to \dens(\CC_2)$ is a positive linear map and according to Lemma~\ref{lemma:obs2-Phi-LEA} the aforementioned map $\Phi$ is locally entanglement annihilating. Thus we get that $\Phi: \dens(\CC_2) \to \dens(\CC_2)$ is positive, not completely positive, and locally entanglement annihilating. This is a contradiction with Lemma~\ref{lemma:obs2-qubit-LeaEB} below, which states any positive and locally entanglement annihilating map $\Phi: \dens(\CC_2) \to \dens(\CC_2)$ is entanglement breaking, thus completely positive.
\end{proof}

\begin{lemma} \label{lemma:obs2-qubit-LeaEB}
Let $\dim(\CC_2) = 2$. If a linear map $\Phi: \lin(\CC_2) \to \lin(\CC_2)$ is positive and locally entanglement annihilating, that is, for every $N \in \NN$ and every $\rho_N \in \dens(\CC_2^{\otimes N})$ we have that $\Phi^{\otimes N}(\rho_N)$ is separable, then $\Phi$ is completely positive and entanglement breaking.
\end{lemma}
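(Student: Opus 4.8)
\emph{Proof plan.}---The plan is to proceed in two stages: a reduction, and then the core implication ``positive and locally entanglement annihilating $\Rightarrow$ completely positive''. For the reduction, note first that, since $\Phi$ maps density matrices to density matrices, it is trace preserving, $\Tr\Phi(X)=\Tr X$ for all $X\in\lin(\CC_2)$. Suppose, for the moment, that complete positivity of such a $\Phi$ has been established; then $\Phi$ is a qubit channel, and since it is locally entanglement annihilating it is in particular entanglement annihilating, so by the characterization of entanglement-annihilating qubit channels in Ref.~\cite{AubrunMullerhermes-anihilation} it is entanglement breaking. An entanglement breaking map admits a representation $\Phi(\cdot)=\sum_k\Tr(E_k\,\cdot)\,\sigma_k$ with $\{E_k\}$ a POVM and $\sigma_k$ states, which in particular is completely positive; hence both conclusions of the lemma follow once complete positivity is in hand.

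To establish that a positive, locally entanglement annihilating $\Phi$ on $\CC_2$ is completely positive, I would argue by contradiction, using the classical fact (Størmer, Woronowicz) that every positive map on $\lin(\CC_2)$ is decomposable: $\Phi=\Phi_1+\Phi_2\circ T$ with $\Phi_1,\Phi_2$ completely positive and $T$ the transpose. Complete positivity of $\Phi$ is equivalent to $\J(\Phi)\ge 0$, so if $\Phi$ is not completely positive then $\J(\Phi)$ is block positive but not positive semidefinite; writing the two-qubit witness as $\J(\Phi)=P+(\id\otimes T)(Q)$ with $P,Q\ge 0$, this forces $Q\ne 0$, i.e.\ the ``co-completely-positive'' part $\Phi_2\circ T$ is genuinely present. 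I would then expand
\[
\Phi^{\otimes N}=\sum_{S\subseteq\{1,\dots,N\}}\Big(\Phi_2^{\otimes|S|}\otimes\Phi_1^{\otimes(N-|S|)}\Big)\circ T_S
\]
(up to a permutation of the tensor factors), where $T_S$ denotes the transpose acting on the factors indexed by $S$, and evaluate it on a suitable $N$-qubit entangled input $\rho_N$ — a natural candidate being a product $(\ketbra{\phi^+})^{\otimes N/2}$ of Bell pairs or an $N$-party GHZ state. The aim is to show that, for $N$ large enough, the terms with $\emptyset\ne S\ne\{1,\dots,N\}$ act on operators whose partial transposes are strongly non-positive, so that $\Phi^{\otimes N}(\rho_N)$ is not positive under partial transposition on some two-qubit subsystem; since every two-qubit marginal of a fully separable state is separable and, for two qubits, separability coincides with positivity of the partial transpose, this contradicts the assumed separability of $\Phi^{\otimes N}(\rho_N)$ and forces $Q=0$, i.e.\ $\Phi$ completely positive.

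The main obstacle is the bookkeeping: the completely positive contribution $\Phi_1^{\otimes N}$ together with the mixed terms could, a priori, cancel the non-positivity produced by the transpose-carrying part, all the more because $\Phi_2$ may itself be entanglement breaking. Handling this requires a quantitative estimate — in the spirit of the proof of \cite[Lemma~2.8]{AubrunMullerhermes-anihilation} and of the composed-map bounds of Ref.~\cite{ChristandlMullerhermesWolf-composedETB} — that amplifies the negativity encoded in $\J(\Phi)\not\ge 0$ across the $N$ copies, so that a negative eigenvalue of (a partial transpose of) $\Phi^{\otimes N}(\rho_N)$ persists as $N\to\infty$. Once the input family is fixed, I expect the argument to come down to tracking a single such eigenvalue as a function of $N$, using only $Q\ge 0$, $Q\ne 0$ from the witness decomposition of $\J(\Phi)$.
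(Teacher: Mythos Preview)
Your reduction is fine: once complete positivity is established, the entanglement-breaking conclusion follows from the known equivalence of entanglement annihilation and entanglement breaking for qubit channels (the paper invokes \cite[Lemma~3.2]{ChristandlMullerhermesWolf-composedETB} here; your citation of \cite{AubrunMullerhermes-anihilation} serves the same purpose).

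The core step, however, has a genuine gap. Your proposed test for non-separability of $\Phi^{\otimes N}(\rho_N)$ is that some two-qubit marginal be NPT. But $\Phi$ is trace preserving, so tracing out all but qubits $i,j$ commutes with the local action: the $(i,j)$-marginal of $\Phi^{\otimes N}(\rho_N)$ is exactly $\Phi^{\otimes 2}(\rho_{ij})$, where $\rho_{ij}$ is the $(i,j)$-marginal of $\rho_N$. Hence your test never probes anything beyond the $N=2$ case, regardless of how large $N$ you take or which $\rho_N$ (Bell pairs, GHZ, \dots) you feed in. You would therefore need to prove the strictly stronger statement ``positive and $2$-entanglement-annihilating on qubits $\Rightarrow$ CP'', which is not what the lemma assumes and is not what the cited literature gives you. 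The ``bookkeeping'' you flag is thus not merely quantitative: the whole mechanism of amplification across $N$ copies is absent from the test you chose.

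The paper sidesteps the decomposability expansion entirely. The key observation is that locally entanglement annihilating implies \emph{tensor-stable positive}: if $\Phi^{\otimes N}(\rho_N)$ is separable for every $N$ and every $\rho_N$, then in particular $\Phi^{\otimes N}(\rho_N)\ge 0$, so $\Phi^{\otimes N}$ is a positive map for all $N$. One then invokes \cite[Theorem~2]{MullerhermesReebWolf-tensorPositive}: a tensor-stable positive qubit map is either completely positive or completely copositive. In the remaining copositive case, $\Phi^\intercal$ (with $\Phi^\intercal(\rho):=\Phi(\rho)^\intercal$) is completely positive and still locally entanglement annihilating, hence entanglement breaking by \cite[Lemma~3.2]{ChristandlMullerhermesWolf-composedETB}; transposing back, $\Phi$ is entanglement breaking and therefore completely positive after all. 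No expansion of $\Phi^{\otimes N}$, no eigenvalue tracking, and the full strength of the LEA hypothesis (all $N$) enters precisely through tensor-stable positivity.
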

\begin{proof}
It was previously shown in \cite[Lemma 3.2]{ChristandlMullerhermesWolf-composedETB} that if $\Phi: \lin(\CC_2) \to \lin(\CC_2)$ is completely positive and locally entanglement annihilating, then $\Phi$ is entanglement breaking.

Assume that $\Phi$ is not completely positive. We still have that $\Phi^{\otimes N}(\rho_N)$ is separable for all $N \in \NN$ and all $\rho_N \in \dens(\CC_2^{\otimes N})$, which implies that $\Phi^{\otimes N}$ is a positive map for all $N \in \NN$. Such maps are called tensor-stable positive maps and $\Phi$ clearly is a tensor-stable positive. It was shown in \cite[Theorem 2]{MullerhermesReebWolf-tensorPositive} that any tensor-stable positive map on qubits is either completely positive or completely copositive, the map $\Psi$ is completely copositive if the map $\Psi^\intercal: \lin(\CC_2) \to \lin(\CC_2)$ given as $\Psi^\intercal(\rho) = (\Psi(\rho))^\intercal$ is completely positive, here $A^\intercal$ is the transpose of $A$. Since we have assumed that $\Phi$ is not completely positive, it follows that $\Phi$ must be completely copositive and so $\Phi^\intercal$ is completely positive, moreover it is straightforward to show that if $\Phi$ is locally entanglement annihilating, then so is $\Phi^\intercal$. Then it follows from \cite[Lemma 3.2]{ChristandlMullerhermesWolf-composedETB} that $\Phi^\intercal$ is entanglement breaking, but this implies that also $\Phi$ is entanglement breaking, since the transpose of entanglement breaking map is again entanglement breaking. Moreover $\Phi$ must be completely positive since any entanglement-breaking map is completely positive.
\end{proof}

\section{Proof of the main theorem: Details for step (d)} \label{app:obs3}
The following statement is almost immediate to prove:
\begin{proposition} \label{prop:obs3-sep}
If $\oM$ is a separable map, then the POVMs contained in $\oM$ are compatible.
\end{proposition}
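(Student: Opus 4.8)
The plan is to unpack the definition of a separable map and observe that it is literally a joint-measurability decomposition in the sense of the Definition of the Preliminaries. Concretely, if $\oM = \{M_{a|x}\}$ is a separable map, then by \eqref{eq:obs2-Mseparable} there are positive semidefinite operators $G_\lambda$ and conditional probability distributions $p_\lambda(a|x)$ such that $M_{a|x} = \sum_\lambda G_\lambda\, p_\lambda(a|x)$ for all $a,x$. First I would record that, without loss of generality, each $\{p_\lambda(a|x)\}$ may be taken to be a genuine normalised conditional probability, $p_\lambda(a|x) \ge 0$ and $\sum_a p_\lambda(a|x) = 1$ for every $x$: a separable element of the tensor product of the cone of behaviours and the cone of positive semidefinite operators is a sum of products of elements of those two cones, and any positive scalar appearing on the behaviour factor can be absorbed into the corresponding $G_\lambda$.

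The only computation to carry out is to check that $\{G_\lambda\}$ is itself a POVM, so that it can serve as the mother POVM. Positivity $G_\lambda \ge 0$ holds by assumption; for the normalisation I would fix an arbitrary value of $x$, sum the decomposition over $a$, and use $\sum_a M_{a|x} = \I$ together with $\sum_a p_\lambda(a|x) = 1$, obtaining
\begin{equation}
\I = \sum_a M_{a|x} = \sum_\lambda G_\lambda \sum_a p_\lambda(a|x) = \sum_\lambda G_\lambda .
\end{equation}
Hence $\{G_\lambda\}$ satisfies $G_\lambda \ge 0$ and $\sum_\lambda G_\lambda = \I$, so it is a legitimate POVM, and $M_{a|x} = \sum_\lambda p_\lambda(a|x)\, G_\lambda$ is exactly the form required in the definition of joint measurability, with $\{G_\lambda\}$ the mother POVM and $p_\lambda(a|x)$ the classical post-processing. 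Therefore every POVM in $\oM$ is obtained by post-processing a single measurement, i.e.\ $\oM$ is compatible.

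I do not expect a genuine obstacle here: the statement is essentially a re-reading of \eqref{eq:obs2-Mseparable}, and this correspondence between joint measurability and separability with respect to these two cones is already implicit in \cite{NamiokaPhelps-cones, Jencova-incomaptibility, Jencova-steering}. The only place that warrants a line of care is the normalisation remark of the first paragraph --- ensuring that the $p_\lambda(a|x)$ in the separable decomposition are honest normalised conditional probabilities rather than merely unnormalised members of the cone --- which is dispatched by the rescaling argument.
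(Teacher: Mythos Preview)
Your proof is correct and follows essentially the same route as the paper: sum the separable decomposition over $a$ to obtain $\sum_\lambda G_\lambda = \I$, conclude that $\{G_\lambda\}$ is the mother POVM, and read off the joint-measurability decomposition. Your extra paragraph on normalisation is not strictly needed here since the paper's definition of a separable map in \eqref{eq:obs2-Mseparable} already takes the $p_\lambda(a|x)$ to be genuine conditional probabilities, but the remark is harmless and clarifying.
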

\begin{proof}
Assume that $\oM$ is of the form as given by \eqref{eq:obs2-Mseparable}, by summing over $a$ we obtain $\I = \sum_a \sum_\lambda p_\lambda(a|x) G_\lambda = \sum_\lambda G_\lambda$ where we have used that $\sum_a p_\lambda(a|x) = 1$ for all $x$. If follows that $\{G_\lambda\}$ is a POVM. Moreover, changing the notation from $p_\lambda(a|x)$ to $p(a|x, \lambda)$, we get $M_{a|x} = \sum_\lambda p(a|x, \lambda) G_\lambda$, i.e., that all POVMs in $\oM$ are a post-processing of $\{G_\lambda\}$, the post-processings are given by $p(a|x, \lambda)$. This is one of the equivalent definitions of compatibility of POVMs \cite{HeinosaariMiyaderaZiman-compatibility,GuhneHaapasaloKraftPellonpaaUola-incomaptiblity}. It is straightforward to see that the converse follows as well.
\end{proof}

\section{Proof of Theorem~\ref*{thm:main}} \label{app:proof}
\thmMain*
\begin{proof}
Assume that for every $N \in \NN$ and for every state $\rho_H \in \dens(\CC_2^{\otimes N})$ we have that $p(a_1 \ldots a_N|x_1 \ldots x_N) = \Tr(\rho(M_{a_1|x_1} \otimes \ldots \otimes M_{a_N|x_N})$ is Bell local. Then, according to Prop.~\ref{prop:obs1-sep}, $\{p(a_1 \ldots a_N|x_1 \ldots x_N)\}$ is a separable element of the tensor product of the underlying cones. But, since this holds for all $N \in \NN$ and for all $\rho_H \in \dens(\CC_2^{\otimes N})$, we get from Prop.~\ref{prop:obs2-sep} that $\oM$ is a separable map. But then, according to Prop.~\ref{prop:obs3-sep}, the set of POVMs $\oM$ is compatible.

The statement of the theorem is obtained by contraposition: if $\oM$ contains incompatible POVMs, then there must exist some $N \in \NN$ and $\rho_H \in \dens(\CC_2^{\otimes N})$ such that $p(a_1 \ldots a_N|x_1 \ldots x_N) = \Tr(\rho(M_{a_1|x_1} \otimes \ldots \otimes M_{a_N|x_N})$ is Bell nonlocal and thus violates some Bell inequality.
\end{proof}

\section{Proof of Theorem~\ref*{thm:NPTboundEnt}} \label{app:proofNPTboundEnt}
\thmNPTboundEnt*
\begin{proof}
Assume that $\oM$ is incompatible, then, according to Lemma~\ref{lemma:obs2-Phi-PnotCP}, $\oM$ gives rise to $\Phi = \tilde{Q} \circ \oM: \lin(\CC_d) \to \lin(\CC_d)$ that is positive but not completely positive. Moreover since for every $N \in \NN$ and for every state $\rho_H \in \dens(\CC_d^{\otimes N})$ we have that $p(a_1 \ldots a_N|x_1 \ldots x_N) = \Tr(\rho(M_{a_1|x_1} \otimes \ldots \otimes M_{a_N|x_N})$ is Bell local, then according to Lemma~\ref{lemma:obs2-Phi-LEA} we have that $\Phi$ is locally entanglement annihilating.

Since $\Phi$ is not completely positive, it is not entanglement breaking. We have thus constructed linear map $\Phi: \lin(\CC_d) \to \lin(\CC_d)$ which is locally entanglement annihilating but not entanglement breaking. It was proved in \cite{AubrunMullerhermes-anihilation} that existence of such map implies the existence of bound entangled quantum states with non-positive partial transpose in $\dens(\CC_{d^2}^{\otimes 2})$. The result thus follows by contradiction.
\end{proof}

\section{Pauli $X$ and $Z$ measurements} \label{app:XZ}
In a bipartite scenario where each part has two dichotomic measurements, all tight Bell inequalities are equivalent to the CHSH inequality ~\cite{Fine-BellIneq}, hence we just need to check when noisy Pauli measurements may lead to a Bell violation of this inequality. We then consider that both Alice and Bob perform the set of two Pauli measurements, $\oP$ and that $P_x:=P_{0|x}-P_{1|x}$ is the observable associated to the POVM $\{P_{0|x},P_{1|x}\}$. The behaviour $p(a_1a_2|x_1x_2)=\Tr[\rho (P_{a_1|x_1}\otimes P_{a_2|x_2})]$ is Bell local for all states $\rho\in\dens(\CC_2\otimes\CC_2)$ if and only if the operator $\text{CHSH}:=P_{0}\otimes P_{0}+P_{0}\otimes P_{1}+P_{1}\otimes P_{0}-P_{1}\otimes P_{1}$ has an eigenvalue strictly greater than $2$~\footnote{Strictly speaking one should consider all $8$ CHSH inequalities which are equivalent via relabelling of inputs and outputs, but due to the symmetry of the measurements we are considering, analysing a single representant of the CHSH inequality suffices.}. Now notice that for noisy Pauli measurements $D_\eta(P_{a|x})$, the associated observable reads as $\eta P_{x}$ hence, the CHSH operator takes the form of $\eta^2\text{CHSH}$ which has its maximal eigenvalue given by $\eta^2 2\sqrt{2}$~\cite{chsh69,chsh69-erratum,tsirelson85}, a quantity which is greater than two iff $\eta^2>\frac{1}{\sqrt{2}}$, hence $\etaB{2}(\oP) = \frac{1}{\sqrt[4]{2}}\approx 0.8409$.

\section{Trine measurements} \label{app:trine-3parties}
The full list of Bell inequalities for scenarios with three parties, three inputs, and two outputs is not known and finding these Bell inequalities is computationally challenging~\cite{RossetbancalGisin-Bellnonlocality}. Since the Bell inequalities are unknown, finding the optimal state $\rho$ is not an obvious task. We can however obtain an upper bound on $\etaB{3}(\oT)$, for that it is enough to find a tripartite quantum state $\rho$ such that $\Tr[\rho (T_{a|x}^\eta\otimes T_{b|y}^\eta\otimes T_{c|z}^\eta)]$, and verifying if a behaviour is Bell local can be done by linear programming~\cite{BrunnerCavalcantiPironioScaraniWehner-BellNonlocality}. By choosing the tripartite state $\rho_3:=\ketbra{\text{GHZ}_Y}$, where $\ket{\text{GHZ}_Y}:=\frac{\ket{+_y}\ket{+_y}\ket{+_y}+\ket{-_y}\ket{-_y}\ket{-_y}}{\sqrt{2}}$ and $\ket{\pm_y}:=\frac{\ket{0}\pm i\ket{1}}{\sqrt{2}}$ are the eigenvectors of the Pauli $Y$ operator, linear programming shows that for $\eta=0.8007$, the behaviour $ \Tr[\rho_3 (T^{\eta}_{a|x} \otimes T^{\eta}_{b|y} \otimes T^{\eta}_{c|z})] $ is Bell NL, hence $\etaB{3}(\oT)<0.8007$, and we have that $\etaB{3}(\oT)<\etaB{2}(\oT)$. All computational code used here is available at~\cite{MTQ_github}. Additionally, in the case of three parties and the set of POVMs $\oT^{\eta*}$, one can prove that for any state $\rho_3 \in \dens(\CC_2^{\otimes 3})$ we have $\Tr[\rho_3 (T^{\eta^*}_{a|x} \otimes T^{\eta^*}_{b|y} \otimes T^{\eta^*}_{c|z})] = p^3 \Tr[\rho_3 (T_{a|x} \otimes T_{b|y} \otimes T_{c|z})] + (1-p^3) p_L(abc|xyz)$ where $p_L(abc|xyz)$ is a behaviour with local hidden variable model and $p \approx 10^{-2}$, see the Appendix~\ref{app:trine-upperBound} for a proof. Hence, it is likely that $\etaB{3}(\oT^{\eta^*}) = 1$, and we need more parties to superactivate the nonlocality of $\oT^{\eta^*}$.

\section{Upper bound on the Bell nonlocality of the trine measurement for three parties} \label{app:trine-upperBound}
We will use that $T^{\eta*}_{a|x} = p T_{a|x} + (1-p) T^{\frac{2}{3}}_{a|x}$ where $T^{\frac{2}{3}}_{a|x} = \frac{2}{3} T_{a|x} + \frac{\I}{6}$ and $p = \dfrac{\eta^* - \eta_3}{1 - \eta_3} \approx 10^{-2}$. Let $\rho_3 \in \dens(\CC_2 \otimes \CC_2 \otimes \CC_2)$, we then have
\begin{equation}
\begin{split}
\Tr(\rho_3(T^{\eta^*}_{a|x} \otimes T^{\eta^*}_{b|y} \otimes T^{\eta^*}_{c|z})) &=
p^3 \Tr(\rho_3(T_{a|x} \otimes T_{b|y} \otimes T_{b|y})) \\
&+ p (1-p) \Tr(\rho_3(T_{a|x} \otimes T^{\eta^*}_{b|y} \otimes T^{\frac{2}{3}}_{c|z} + T^{\eta^*}_{a|x} \otimes T^{\frac{2}{3}}_{b|y} \otimes T_{c|z} + T^{\frac{2}{3}}_{a|x} \otimes T_{b|y} \otimes T^{\eta^*}_{c|z})) \\
&+ (1-p)^3 \Tr(\rho_3(T^{\frac{2}{3}}_{a|x} \otimes T^{\frac{2}{3}}_{b|y} \otimes T^{\frac{2}{3}}_{c|z})).
\end{split}
\end{equation}
We will now show that terms of the form $\Tr(\rho_3(T_{a|x} \otimes T^{\eta^*}_{b|y} \otimes T^{\frac{2}{3}}_{c|z}))$ are Bell local. Since $T^{\frac{2}{3}}_{c|z}$ are compatible, there is a POVM $G_\lambda$ and conditional probability $p^{(3)}(c|z, \lambda)$ such that $T^{\frac{2}{3}}_{c|z} = \sum_\lambda p^{(3)}(c|z, \lambda) G_\lambda$. We thus get $\Tr(\rho_3(T_{a|x} \otimes T^{\eta^*}_{b|y} \otimes T^{\frac{2}{3}}_{c|z})) = \sum_\lambda p^{(3)}(c|z, \lambda) \Tr(\rho_3(T_{a|x} \otimes T^{\eta^*}_{b|y} \otimes G_\lambda))$. Denoting $\Tr_3[\rho_3(\I \otimes \I \otimes G_\lambda)] = p(\lambda) \sigma_\lambda$, where $\sigma_\lambda \in \dens(\CC_2 \otimes \CC_2)$ and $p(\lambda) = \Tr[\rho_3(\I \otimes \I \otimes G_\lambda)]$ is the corresponding probability, we get
\begin{equation}
\Tr(\rho_3(T_{a|x} \otimes T^{\eta^*}_{b|y} \otimes T^{\frac{2}{3}}_{c|z})) = \sum_\lambda p(\lambda) p^{(3)}(c|z, \lambda) \Tr(\sigma_\lambda(T_{a|x} \otimes T^{\eta^*}_{b|y})).
\end{equation}
Since it was proved in \cite{BeneVertesi-noBell} that $T^{\eta^*}_{b|y}$ does not violate any bipartite Bell inequality, no matter what is measured by the other party, we must have
\begin{equation}
 \Tr(\sigma_\lambda(T_{a|x} \otimes T^{\eta^*}_{b|y})) = \sum_\mu p(\mu | \lambda) p^{(1)}(a|x, \mu, \lambda) p^{(2)}(b|y, \mu, \lambda)
\end{equation}
and we get
\begin{align}
\Tr(\rho_3(T_{a|x} \otimes T^{\eta^*}_{b|y} \otimes T^{\frac{2}{3}}_{c|z})) &= \sum_\lambda p(\lambda) p^{(3)}(c|z, \lambda) \sum_\mu p(\mu | \lambda) p^{(1)}(a|x, \mu, \lambda) p^{(2)}(b|y, \mu, \lambda) \\
&= \sum_\lambda p(\mu, \lambda) p^{(1)}(a|x, \mu, \lambda) p^{(2)}(b|y, \mu, \lambda) p^{(3)}(c|z, \lambda)
\end{align}
which is a local hidden variable model. It thus follows that the only term in $p(a b c| x y z)$ that may be Bell nonlocal is $\Tr(\rho_3(T_{a|x} \otimes T_{b|y} \otimes T_{b|y}))$. But this term comes with the factor of $p^3 \approx 10^{-6}$ which will make the violation of any Bell inequality extremely small or easily nullified by the two other terms which are of order $p^1$ and $p^0$ respectively.

\bibliography{citations}

\end{document}